\renewcommand\footnotetextcopyrightpermission[1]{} 
\newtheorem{defn}{Definition}[section]
\newtheorem{rmk}{Remark}[section]
\newtheorem{conj}{Conjecture}[section]
\begin{document}
\title{Towards Algorithmic Typing for DOT}         
\subtitle{Technical Report}


\author{Abel Nieto}
\affiliation{
  \institution{University of Waterloo}            
  \city{Waterloo}
  \state{ON}
  \country{Canada}
}
\email{anietoro@uwaterloo.ca}          


\thanks{with paper note}                

\begin{abstract}
The Dependent Object Types (DOT) calculus formalizes key features of Scala. The D$_{<:}$ calculus is the core of DOT. To date, presentations of D$_{<:}$ have used declarative typing and subtyping rules, as opposed to algorithmic. Unfortunately, algorithmic typing for full D$_{<:}$ is known to be an undecidable problem.

We explore the design space for a restricted version of D$_{<:}$ that has decidable typechecking. Even in this simplified D$_{<:}$, algorithmic typing and subtyping are tricky, due to the 
``bad bounds" problem. The Scala compiler bypasses bad bounds at the cost of a loss in  expressiveness in its type system. Based on the approach taken in the Scala compiler, we present the \emph{Step Typing} and \emph{Step Subtyping} relations for D$_{<:}$. We prove these relations sound and decidable. They are not complete with respect to the original D$_{<:}$ rules.
\end{abstract}




\maketitle

\section{Introduction}

Would you rather have a typechecker that is run by the computer but is sometimes wrong, or one that is always right but needs to be run by hand?

\emph{``I want to have my cake and eat it too"}, you say. That is going to be difficult. On the one hand, the Scala compiler implements a typechecking algorithm that accepts or rejects Scala programs, but is ocassionally wrong due to bugs. On the other hand, the DOT calculus is type-safe \cite{Amin2016}, but its typing rules can only be run manually via a proof assistant.

Why manually? The problem is that the typing rules are not syntax-directed, so an algorithm cannot be easily derived from them. For example, take the transitivity rule for subtyping, present in many calculi (DOT included):

\infrule[Trans]
 {\sub{\Gamma}{S}{T} \andalso \sub{\Gamma}{T}{U}}
 {\sub{\Gamma}{S}{U}}

For a theorem prover this rule is no problem: get the human to provide a $T$ for which the premises are satisfied, and then we can conclude $\sub{\Gamma}{S}{U}$. For an algorithm, it is harder: how should it guess the right $T$? Iterating over the infinitely many possibilities is not an option.

The standard solution is to merge the problematic rule with the other rules that use it, so that it becomes less general but more tractable. Here is how F$_{<:}$ \cite{cardelli1994extension} merges transitivity with type-variable lookup:

\infrule[Trans-TVar]
  {X <: U \in \Gamma \andalso \sub{\Gamma}{U}{T}}
  {\sub{\Gamma}{X}{T}}

This is better: to determine whether $X$ is a subtype of $T$, the typechecker can look up $X$ in $\Gamma$, obtain the upper bound $U$, and recursively check whether $\sub{\Gamma}{U}{T}$.

The algorithmic presentation of the typing rules has one potential disadvantage and one clear advantage when compared to the declarative style. The disadvantage is that for the algorithmic rules it might be less clear what programs are type-correct. Going back to the transitivity example, if we replace \rn{Trans} by \rn{Trans-TVar}, it is no longer clear whether the following program typechecks:

\begin{lstlisting}[language=Scala]
// We know that in the current environment
// Human <: Mammal and Mammal <: Organism
val orgs: List[Organism] = ...
orgs.push(new Human()) // Human <: Organism?
\end{lstlisting}

The advantage is that if the typing rules are syntax-directed,
it is possible to write an algorithm for type-checking programs. We can then implement that algorithm as, for example, a PLT Redex \cite{klein2012run} model and use the model to improve our type system.

In this paper, we describe our work in progress towards algorithmic typing for D$_{<:}$ \cite{Amin2016}, a simple calculus that is the core of DOT (Figure \ref{dsubsyntax}). Our quest gets off to a bad start: D$_{<:}$ is a generalization of F$_{<:}$, the polymorphic lambda calculus with subtyping. Typing F$_{<:}$ is undecidable \cite{pierce1994bounded}, which makes typing D$_{<:}$ also undecidable \cite{arxiv1}.

\begin{figure}
\setlength\tabcolsep{1.5pt}
\begin{tabular}{l l}
  $x, y, z$            & \textbf{Variable} \\
  $v ::=$              & \textbf{Value}    \\
  \ \ $\{A = T\}$      & \ \ type tag      \\
  \ \ $\lambda(x: T)t$ & \ \ lambda        \\
  $s, t, u ::=$        & \textbf{Term}     \\
  \ \ $x$              & \ \  variable     \\
  \ \ $v$              & \ \  value \\
  \ \ $x \ y$            & \ \  application \\
  \ \ \textbf{let} $x = t$ \textbf{in} $u$ & \ \  let \\
\\
$S, T, U ::=$ & \textbf{Type}    \\
\ \ $\top$  & \ \ top type       \\
\ \ $\bot$  & \ \ bottom type    \\
\ \ $\tTypeDec{A}{S}{T}$    & \ \ type declaration \\
\ \ $x.A$                   & \ \ path-dependent type \\
\ \ $\forall(x: S)T$        & \ \ dependent function \\
\end{tabular}
\caption{Terms and types of D$_{<:}$ \citep{Amin2016}}
\label{dsubsyntax}
\end{figure}

There is still hope, though. There are simpler versions of F$_{<:}$ with decidable typechecking. For example, Kernel F$_{<:}$, whose typing relation is decidable \cite{cardelli1985understanding}, differs only minimally from full F$_{<:}$ (Figure \ref{subffamily}). Specifically, Kernel F$_{<:}$ is less permissive when testing for subtyping of function types. Notice how in the Kernel version, the upper bound $T$ needs to be the same for both types in the conclusion; by contrast, full F$_{<:}$ allows different upper bounds. Could Kernel F$_{<:}$ be used as the basis for a simpler D$_{<:}$ that can be algorithmically typed?

\begin{figure}
\infrule[Full-S-All]
  {\sub{\Gamma}{T_2}{T_1} \andalso \sub{\Gamma, X <: T_2}{U_1}{U_2}}
  {\sub{\Gamma}{\forall(X <: T_1)U_1}{\forall(X <: T_2)U_2}}

\infrule[Kernel-S-All]
  {\sub{\Gamma, X <: T}{U_1}{U_2}}
  {\sub{\Gamma}{\forall(X <: T)U_1}{\forall(X <: T)U_2}}
   
\caption{Subtyping of function types in full F$_{<:}$ (undecidable) and Kernel F$_{<:}$ (decidable) \citep{Pierce:2002:TPL:509043}}
\label{subffamily}
\end{figure}

This paper makes three contributions:
\begin{itemize}
\item We describe how even when the original source of undecidability is eliminated, the problem of bad bounds complicates algorithmic typing and subtyping of D$_{<:}$ (Section \ref{badbounds}).

\item If bad bounds are so hard to deal with, how does the Scala compiler handle them? In fact, it does not. In Section \ref{typingscala}, we show how the Scala compiler sidesteps the bad bounds problem by using a subtyping relation that is not transitive.

\item Finally, in Section \ref{formal} we introduce the Step Typing and Subtyping relations. Step Typing and Subtyping are sound and decidable, but not complete, with respect to D$_{<:}$'s standard relations.
\end{itemize}

\section{Bad Bounds}
\label{badbounds}
\citet{Pierce:2002:TPL:509043} presents a design recipe for coming up with algorithmic typing rules for a calculus:
\begin{itemize}
\item Start with a set of declarative typing rules.
\item Modify the rules so that they are all syntax-directed.
\item Prove the syntax-directed rules sound with respect to the declarative ones. If $\vdash_A$ is the algorithmic typing relation, $\talgo{\Gamma}{t}{T} \implies \typ{\Gamma}{t}{T}$.
\item Finally, prove a \emph{minimality} result: if a term can be typed, the algorithmic rules will type it with the most precise type. $\typ{\Gamma}{t}{U} \implies \talgo{\Gamma}{t}{T} \land \sub{\Gamma}{T}{U}$.
\end{itemize}

Minimality is important. While the declarative typing rules can afford to assign arbitrarily many types to a term, the algorithmic rules need to assign just one type, for the sake of efficiency and determinism. Below, we conjecture that there does not exist an algorithmic typing relation for D$_{<:}$ that satisfies the minimality condition. 

D$_{<:}$ has a restricted form of types as values. A type tag 
{$\{A = T\}$ defines $A$ as a synonym for $T$, and has type $\tTypeDec{A}{T}{T}$. If bound in the current environment to a variable, the type tag can later be used as a path-dependent type (\code{x.A}):

\begin{lstlisting}[mathescape=true]
$\textbf{let } x = \{A = T\} \textbf{ in}$
$\ldots x.A \ldots$
\end{lstlisting}

A path-dependent type is related to the lower and upper bounds in its type declaration via subtyping:

\infrule[<:-Sel]
  {\typ{\Gamma}{x}{\tTypeDec{A}{S}{T}}}
  {\sub{\Gamma}{S}{x.A}}

\infrule[Sel-<:]
  {\typ{\Gamma}{x}{\tTypeDec{A}{S}{T}}}
  {\sub{\Gamma}{x.A}{T}}

Now notice what happens when \rn{<:-Sel} and \rn{Sel-<:} are combined with \rn{Trans}, in a term

\begin{lstlisting}[mathescape=true]
$\lambda(e: \tTypeDec{E}{\top}{\bot})$
  $\lambda(f: \top)$
    $\lambda(x: \top)$
      $f x$
\end{lstlisting}

How can $f x$ be well-typed, when $f$ has type $\top$? The reason is that the application is typed in an environment where  $\top <: e.E$ and $e.E <: \bot$, which means that $\top <: \bot$, because of \rn{Trans}. The entire type lattice collapses, so $f$ can also be assigned type e.g. $\top \rightarrow \top$, making $f x$ type-correct. 

In effect, a type declaration introduces not only a subtyping relation between a path-dependent type and its bounds, but also a subtyping relation between the bounds themselves. \citet{Amin2016} refer to these ``strange" type declarations as having \emph{bad bounds}. Bad bounds affect minimality because a term can now be typed with two different types, neither of which is a subtype of the other. This leads us to the conjecture below.

\begin{conj}[Impossibility of minimal typing]
\label{minimality}
Let $\typ{\Gamma}{t}{T}$ and $\sub{\Gamma}{T}{U}$ be the typing and subtyping relations for D$_{<:}$. There does not exist a function $\talgo{\Gamma}{t}{T}$\footnote{Notice $\talgo{\Gamma}{t}{T}$ is a function, and not simply a relation. Therefore, the $(\Gamma, t)$ pair is mapped to at most one type.} such that the following two hold:
\begin{itemize}
\item $\talgo{\Gamma}{t}{T} \implies \typ{\Gamma}{t}{T}$ (soundness)
\item $\typ{\Gamma}{t}{U} \implies \talgo{\Gamma}{t}{T} \land \sub{\Gamma}{T}{U}$ (minimality)
\end{itemize}
\end{conj}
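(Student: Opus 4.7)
My plan is to refute the existence of $\vdash_A$ by exhibiting one pair $(\Gamma, t)$ for which the declarative type set $\{U : \typ{\Gamma}{t}{U}\}$ has no minimum under $\sub{\Gamma}{\cdot}{\cdot}$. For such a pair, any candidate value $T = \vdash_A(\Gamma, t)$ would have to lie in that set (by soundness) and satisfy $\sub{\Gamma}{T}{U}$ for every other $U$ in it (by minimality)---that is, be a minimum element, a contradiction.

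The witness I would try is the context
\begin{align*}
\Gamma \;=\ & a : \tTypeDec{A}{\bot}{\top},\ b : \tTypeDec{B}{\bot}{\top}, \\
& e_1 : \tTypeDec{E_1}{\top}{\top \rightarrow a.A},\ e_2 : \tTypeDec{E_2}{\top}{\top \rightarrow b.B}, \\
& x : \top,\ y : \top
\end{align*}
together with $t = x\,y$. The declarations $e_1$ and $e_2$ inject two independent bad bounds: chaining \rn{<:-Sel} with \rn{Sel-<:} yields $\sub{\Gamma}{\top}{\top \rightarrow a.A}$ and $\sub{\Gamma}{\top}{\top \rightarrow b.B}$. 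Subsuming $x : \top$ through $e_1$ then gives $\typ{\Gamma}{xy}{a.A}$, and through $e_2$ gives $\typ{\Gamma}{xy}{b.B}$. Closure under subsumption makes the full type set of $xy$ equal to the union of $\Gamma$-supertypes of $a.A$ and $\Gamma$-supertypes of $b.B$.

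The crux is then to show that $a.A$ and $b.B$ share no common subtype in $\Gamma$ other than $\bot$, and that $\bot$ itself is not a type of $xy$. For this I would prove a two-part inversion lemma specific to this $\Gamma$: (i) $\sub{\Gamma}{T}{\bot}$ forces $T = \bot$, and (ii) $\sub{\Gamma}{T}{a.A}$ forces $T \in \{\bot, a.A\}$ (symmetrically for $b.B$). Both go by induction on subtyping derivations: the axioms introduced via \rn{<:-Sel} and \rn{Sel-<:} here have right-hand sides among $a.A$, $b.B$, $e_1.E_1$, $e_2.E_2$, $\top$, $\top \rightarrow a.A$, and $\top \rightarrow b.B$---none of them $\bot$---so \rn{Trans} can never propagate a subtyping into $\bot$ from above. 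Combined with the two exhibited derivations, the only possible minimum of the type set of $xy$ is $\bot$; yet $\typ{\Gamma}{xy}{\bot}$ would require $\sub{\Gamma}{a.A}{\bot}$ or $\sub{\Gamma}{b.B}{\bot}$, which part (i) rules out, so no minimum exists.

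The hard part will be making that inversion rigorous. Declarative D$_{<:}$ subtyping is not syntax-directed, and \rn{Trans} can combine the induced axioms in intricate ways through function types via the $\forall$-rule. A naive induction on derivation size is unlikely to suffice; I expect to need a transitivity-elimination (normal-form) result for subtyping in $\Gamma$, showing that every derivation factors so that \rn{Trans} is applied only at path-dependent types. Developing such an auxiliary lemma---essentially a restricted instance of the inversion work that motivates Step Subtyping in Section \ref{formal}---is the most delicate piece of the argument, and is likely why the statement is advertised as a conjecture rather than a theorem.
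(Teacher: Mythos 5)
Your proposal follows the same skeleton as the paper's own (incomplete) argument for this conjecture: exhibit a $(\Gamma, t)$ for which $t$ has two declarative types whose only common $\Gamma$-subtype is $\bot$, observe that soundness plus minimality force $\talgo{\Gamma}{t}{\bot}$, and then argue that $\bot$ is not actually a derivable type for $t$. The difference is in the witness and in the auxiliary machinery. The paper uses a single bad-bound binding $e: \tTypeDec{E}{\forall(b:B)B}{\forall(b:B)C}$ so that the two incomparable types are type declarations $B$ and $C$ with \emph{distinct labels}; the needed inversion (Lemma \ref{declstodecls}: any subtype of a type declaration in $\gammas$ is $\bot$ or a declaration with the same label) is then established not by transitivity elimination, as you propose, but by a mutual induction over a red/blue partition of types that is shown to be preserved by subtyping (Lemma \ref{gammawellbehaved}). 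That colouring device is considerably more elementary than a cut-elimination result for declarative subtyping, and it is worth knowing about: your context contains four bad-bound-bearing declarations rather than one, so the corresponding invariant would have more cases to preserve, and a transitivity-normal-form lemma for full D$_{<:}$ subtyping is exactly the kind of result this paper is pessimistic about.

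One step in your argument is under-claimed. You reduce ``$\typ{\Gamma}{xy}{\bot}$ is not derivable'' to your subtyping inversion (i), on the grounds that it ``would require $\sub{\Gamma}{a.A}{\bot}$ or $\sub{\Gamma}{b.B}{\bot}$.'' That covers only the subsumption route. The application rule can yield $\bot$ directly whenever $\typ{\Gamma}{x}{\forall(z:S)\bot}$ and $\typ{\Gamma}{y}{S}$ for \emph{some} $S$, and the first judgement is not obviously blocked: D$_{<:}$'s contravariant \rn{All-<:-All} checks the codomain in the extended context $\Gamma, z:S$, so choosing $S = \tTypeDec{W}{\top}{\bot}$ gives $\sub{\Gamma, z:S}{a.A}{\bot}$ via the fresh bad bound, hence $\sub{\Gamma}{\forall(z:\top)a.A}{\forall(z:S)\bot}$ and, by \rn{Trans}, $\sub{\Gamma}{\top}{\forall(z:S)\bot}$. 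To finish you must additionally show that no such $S$ is also a supertype of $\top$ in $\Gamma$ (so that $y$ cannot inhabit it), which is an inversion on typing derivations for applications, not just on subtyping. The paper's own sketch has the same hole --- it says the judgement $\typ{\gammas}{w}{\bot}$ ``does not seem obtainable (but we are missing the proof)'' --- which is precisely why the statement is a conjecture rather than a theorem. Your proposal is therefore at the same level of completeness as the paper's, but you should state this last obligation explicitly rather than fold it into claim (i).
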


To see why the conjecture should be true, suppose such a function $\vdash_A$ exists. Now consider the term

\begin{lstlisting}[mathescape=true]
$A \equiv \lambda(e: \tTypeDec{E}{\forall(b: B)B}{\forall(b: B)C})$
      $\textbf{let } f = \lambda(b: B)b \textbf{ in}$ 
        $\textbf{let } b = \{V = \top\} \textbf{ in }$
          $f b$
\end{lstlisting}

where $B$ and $C$ are syntactic abbrevations for types:
\begin{itemize}
\item $B \equiv \tTypeDec{V}{\top}{\top}$
\item $C \equiv \tTypeDec{Z}{\top}{\top}$
\end{itemize}

While typechecking $A$, we will eventually descend into the environment $\Gamma_{\star} = e: \tTypeDec{E}{\forall(b: B)B}{\forall(b: B)C}$. The introduced bad bounds ensure $\sub{\Gamma_{\star}}{\forall(b: B)B}{\forall(b: B)C}$. If $w$ denotes the body of the lambda, $\typ{\Gamma_{\star}}{w}{B}$ and $\typ{\Gamma_{\star}}{w}{C}$.

Minimality implies $\talgo{\Gamma_{\star}}{w}{T}$, with $\sub{\Gamma_{\star}}{T}{B}$. By Lemma \ref{declstodecls}, $T = \bot$ or $T = \tTypeDec{V}{V_1}{V_2}$. 
Similarly, $\sub{\Gamma_{\star}}{T}{C}$, which means $T = \bot$ or $T = \tTypeDec{Z}{Z_1}{Z_2}$. This means $T = \bot$. Because $\vdash_{A}$ is sound, we must have $\typ{\Gamma_{\star}}{w}{\bot}$, which does not seem like an obtainable judgement (but we are missing the proof).

\begin{rmk}
It is not clear that the impossibility result, if true, carries over to DOT, because DOT has intersection types. A typing function for DOT might be able to produce the judgement $\talgo{\gammas}{w}{B \land C}$, and $B \land C$ is plausibly a minimal typing for $w$.
\end{rmk}

In general, it is surprisingly tricky to prove statements about D$_{<:}$ that involve bad bounds. Before we tackle Lemma \ref{declstodecls}, we need to prove a rather cumbersome technical lemma.

\begin{restatable}[$\Gamma_{\star}$ is Well-Behaved]{lemma}{gammastarlemma}
\label{gammawellbehaved}
Define the following ``colour" predicates on types:

\infax{\bluep{e.E}}

\infrule
  {$T$ \text{ is a function type}}
  {\bluep{T}}
  
\infrule
  {E_1 = \bot \lor \bluep{E_1} \\
   E_2 = \top \lor \bluep{E_2}}
  {\redp{\tTypeDec{E}{E_1}{E_2}}}

The red and blue tags have no meaning beyond partitioning types into two sets that start as disjoint and stay disjoint, in the presence of subtyping.

Then all of the following hold:
\begin{align}
\typ{\gammas}{e}{T} &\implies T = \top \lor \redp{T} \label{wellbehaved1} \\
\sub{\gammas}{T}{S} \land \redp{T} &\implies S = \top \lor \redp{S} \label{wellbehaved2} \\
\sub{\gammas}{S}{T} \land \redp{T} &\implies S = \bot \lor \redp{S} \label{wellbehaved3} \\
\sub{\gammas}{T}{S} \land \bluep{T} &\implies S = \top \lor \bluep{S} \label{wellbehaved4} \\
\sub{\gammas}{S}{T} \land \bluep{T} &\implies S = \bot \lor \bluep{S} \label{wellbehaved5} \\
\sub{\gammas}{T}{\bot} &\implies T = \bot \label{wellbehaved6} \\
\sub{\gammas}{\top}{T} &\implies T = \top \label{wellbehaved7}
\end{align}
\end{restatable}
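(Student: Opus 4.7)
The overall plan is to prove the six subtyping claims (\ref{wellbehaved2})--(\ref{wellbehaved7}) together by simultaneous strong induction on the height of the subtyping derivation, and then to derive (\ref{wellbehaved1}) by a short induction on the typing derivation using those claims as lemmas. The mutual phrasing is essential: each of (\ref{wellbehaved2})--(\ref{wellbehaved5}) will, in its transitivity case, need to appeal to (\ref{wellbehaved6}) or (\ref{wellbehaved7}) to dispose of the possibility that the intermediate type is $\bot$ or $\top$, so the six statements cannot easily be separated.

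For each of (\ref{wellbehaved2})--(\ref{wellbehaved7}) I would case-analyse on the last rule of the subtyping derivation. Reflexivity, $T <: \top$, and $\bot <: T$ are immediate. The structural rules for function types and type declarations preserve the syntactic shape that determines colour, so a supertype of a function type is again a function type, and a supertype of a red declaration is either $\top$ or another type declaration whose bounds again match the colour condition. The \rn{<:-Sel} and \rn{Sel-<:} cases are where the choice of environment matters: since $\gammas$ contains only $e$, these rules can only introduce $\forall(b: B)B <: e.E$ and $e.E <: \forall(b: B)C$, and all three types are blue, so (\ref{wellbehaved4}) and (\ref{wellbehaved5}) hold trivially while (\ref{wellbehaved2})/(\ref{wellbehaved3}) are not applicable because $e.E$ is not red.

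The interesting case is transitivity. For (\ref{wellbehaved3}), given subderivations of $S <: U$ and $U <: T$ with $\redp{T}$, the inductive hypothesis of (\ref{wellbehaved3}) applied to the right subderivation yields $U = \bot$ or $\redp{U}$; in the red subcase we reapply (\ref{wellbehaved3}) to $S <: U$, and in the $\bot$ subcase we apply (\ref{wellbehaved6}) to the subderivation $S <: \bot$ to conclude $S = \bot$. The symmetric two-hop argument handles transitivity for (\ref{wellbehaved2}) via (\ref{wellbehaved7}), while (\ref{wellbehaved6}) and (\ref{wellbehaved7}) are closed under transitivity directly by a single nested appeal. What keeps the whole scheme honest is that $\gammas$ admits no additional bad-bound collapse: for instance, $\forall(b: B)C$ is not a subtype of $\forall(b: B)B$, because $C$ is not a subtype of $B$ (the declared labels $V$ and $Z$ do not match), so the red and blue classes stay genuinely disjoint throughout the induction.

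Once (\ref{wellbehaved2})--(\ref{wellbehaved7}) are established, (\ref{wellbehaved1}) follows by a straightforward induction on the derivation of $\typ{\gammas}{e}{T}$: the variable-lookup rule assigns $e$ its declared type $\tTypeDec{E}{\forall(b: B)B}{\forall(b: B)C}$, which is red because both bounds are function types and hence blue, and the subsumption case provides a smaller derivation $\typ{\gammas}{e}{S}$ together with $\sub{\gammas}{S}{T}$, at which point the inductive hypothesis gives $S = \top$ or $\redp{S}$ and then (\ref{wellbehaved7}) or (\ref{wellbehaved2}) yields the desired conclusion for $T$. I expect the main obstacle to be bookkeeping the simultaneous induction cleanly, in particular choosing a well-founded measure (derivation height, or a lexicographic pair of subderivation sizes) that strictly decreases across every nested appeal in the transitivity cases; the individual cases are mostly routine, but there are many of them and a sloppy measure will make the argument circular.
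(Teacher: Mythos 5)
Your overall strategy --- mutual induction with a case analysis on the last rule applied, transitivity discharged via (\ref{wellbehaved6}) and (\ref{wellbehaved7}) --- matches the paper's proof, and your treatment of the structural and transitivity cases is essentially the paper's. The genuine gap is in how you stratify the induction and, consequently, in your handling of \rn{<:-Sel} and \rn{Sel-<:}. You claim these rules ``can only introduce $\forall(b:B)B <: e.E$ and $e.E <: \forall(b:B)C$'', but that is false: the premise of \rn{<:-Sel} is a \emph{typing} judgement $\typ{\gammas}{e}{\tTypeDec{E}{S}{T}}$, and by subsumption $e$ can be assigned infinitely many declaration types --- any $\tTypeDec{E}{S}{T}$ with $\sub{\gammas}{S}{\forall(b:B)B}$ and $\sub{\gammas}{\forall(b:B)C}{T}$, e.g.\ $\tTypeDec{E}{\bot}{\top}$, which yields $\sub{\gammas}{\bot}{e.E}$. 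What is actually true, and what the paper uses, is that every type assignable to $e$ is $\top$ or red, so the relevant lower bound is $\bot$ or blue and the upper bound is $\top$ or blue; that is exactly statement (\ref{wellbehaved1}). This makes your plan of first proving (\ref{wellbehaved2})--(\ref{wellbehaved7}) ``by induction on the height of the subtyping derivation'' and only afterwards deriving (\ref{wellbehaved1}) circular as stated: the \rn{<:-Sel} and \rn{Sel-<:} cases of (\ref{wellbehaved2})--(\ref{wellbehaved7}) need (\ref{wellbehaved1}), while (\ref{wellbehaved1})'s subsumption case needs (\ref{wellbehaved2}) and (\ref{wellbehaved7}). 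Moreover, the height of the subtyping derivation alone is not even a well-defined measure there, since the recursive appeal descends through a typing subderivation.

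The fix is what the paper does: include (\ref{wellbehaved1}) in the mutual induction and measure on the size of the entire derivation tree, in which typing and subtyping derivations are interleaved; the subtyping derivations buried inside the typing premise of \rn{<:-Sel}/\rn{Sel-<:} are then strictly smaller, so invoking (\ref{wellbehaved1}) on that premise (and (\ref{wellbehaved2}) within it) is legitimate. With (\ref{wellbehaved1}) available, the cases resolve as in the paper: for (\ref{wellbehaved2})/(\ref{wellbehaved3}) the \rn{<:-Sel}/\rn{Sel-<:} cases are vacuous because a red type is a declaration, whereas the bound in question must be $\bot$, $\top$, or blue; for (\ref{wellbehaved4})/(\ref{wellbehaved5}) one side is $e.E$, hence blue by definition, and the other is a bound of a red declaration, hence $\top$/$\bot$ or blue; and (\ref{wellbehaved6}) uses (\ref{wellbehaved1}) to rule out $\sub{\gammas}{e.E}{\bot}$ via \rn{Sel-<:}, since a red declaration cannot have $\bot$ as its upper bound. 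The rest of your argument, including the observation that $\forall(b:B)C$ is not a subtype of $\forall(b:B)B$ because the labels $V$ and $Z$ differ, is consistent with the paper.
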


\begin{proof}
By mutual induction on a derivation of any of the statements above. The full proof is included in the appendix.
\end{proof}

We can now show that in $\Gamma_{\star}$ type declarations do not ``switch" their tags.

\begin{lemma}
\label{declstodecls}
$\sub{\Gamma_{\star}}{T}{\tTypeDec{X}{X_1}{X_2}} \implies T = \bot \lor T = \tTypeDec{X}{X_1'}{X_2'}$.
\end{lemma}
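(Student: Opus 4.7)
My plan is to induct on the derivation of $\sub{\Gamma_{\star}}{T}{\tTypeDec{X}{X_1}{X_2}}$ and case on the last subtyping rule used. Most cases are mechanical: \rn{Refl} yields $T = \tTypeDec{X}{X_1}{X_2}$; \rn{Bot} gives $T = \bot$; and the type-declaration subtyping rule forces $T = \tTypeDec{X}{X_1'}{X_2'}$ with the same label $X$. The rules whose conclusion does not syntactically match a type declaration on the right---namely \rn{Top}, the function-type subtyping rule, and \rn{<:-Sel} (whose conclusion is always a path-dependent type)---are vacuous, since the right-hand side of the conclusion must be $\tTypeDec{X}{X_1}{X_2}$.

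The one base case that requires real work is \rn{Sel-<:}, where the conclusion $\sub{\Gamma_{\star}}{x.A}{\tTypeDec{X}{X_1}{X_2}}$ comes with premise $\typ{\Gamma_{\star}}{x}{\tTypeDec{A}{S}{\tTypeDec{X}{X_1}{X_2}}}$. Since $\Gamma_{\star}$ binds only the variable $e$, we must have $x = e$. I would then invoke part (\ref{wellbehaved1}) of Lemma \ref{gammawellbehaved} to conclude that the type $\tTypeDec{A}{S}{\tTypeDec{X}{X_1}{X_2}}$ is either $\top$ or red. It is clearly not $\top$, and being red would require its upper bound $\tTypeDec{X}{X_1}{X_2}$ to be either $\top$ or blue. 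But a type declaration is neither $\top$ nor blue under the given colour definitions, so this subcase is impossible and can be discharged by contradiction---thereby ruling out $T = x.A$ as a subtype of a type declaration in $\Gamma_{\star}$.

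The \rn{Trans} case follows from the inductive hypothesis together with the well-behavedness lemma. If $\sub{\Gamma_{\star}}{T}{T'}$ and $\sub{\Gamma_{\star}}{T'}{\tTypeDec{X}{X_1}{X_2}}$, the IH applied to the right premise gives $T' = \bot$ or $T' = \tTypeDec{X}{X_1''}{X_2''}$. When $T' = \bot$, part (\ref{wellbehaved6}) of Lemma \ref{gammawellbehaved} forces $T = \bot$. When $T' = \tTypeDec{X}{X_1''}{X_2''}$, I reapply the IH to the left premise to obtain $T = \bot$ or $T = \tTypeDec{X}{X_1'}{X_2'}$, as required.

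The main obstacle is the \rn{Sel-<:} case: without the well-behavedness invariant, bad bounds could in principle allow $e.E$ to sit below an arbitrary type declaration, breaking the claim. The red/blue colouring of Lemma \ref{gammawellbehaved} is precisely what rules this out, by forcing any type assignable to $e$ of shape $\tTypeDec{A}{S}{T_2}$ to have an upper bound $T_2$ that is either $\top$ or blue, and thus never itself a bare type declaration.
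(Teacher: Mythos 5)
Your proof is correct and follows the same route as the paper: induction on the subtyping derivation, with Lemma \ref{gammawellbehaved} discharging the \rn{Sel-<:} case (via the red/blue colouring of $e$'s possible types) and part (\ref{wellbehaved6}) handling the $T' = \bot$ branch of \rn{Trans}. The paper only sketches this in one line; your write-out of the cases is a faithful and correct elaboration of that sketch.
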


\begin{proof}
By induction on a derivation of $\sub{\Gamma_{\star}}{T}{\tTypeDec{X}{X_1}{X_2}}$, using Lemma \ref{gammawellbehaved} to reason about type bounds.
\end{proof}

\section{Typing Scala}
\label{typingscala}
If bad bounds cause so much trouble, how does the Scala compiler\footnote{By the ``Scala compiler", we mean a June 2017 version of Dotty. In most cases, \code{scalac} behaves similarly to Dotty.} manage to typecheck them? In fact, Scala avoids dealing with bad bounds by restricting its subtyping relation to not be transitive.

Consider the example code below, which is a Scala version of our 
counterexample for minimality of D$_{<:}$. In D$_{<:}$, the code would typecheck, because the bounds on the abstract type declaration mean that \code{Int $=>$ Int $<:$ Int $=>$ String}.

However, the snippet does not typecheck in Scala: there is no transitivity of subtyping!

\begin{lstlisting}[language=Scala]
trait BadBounds {
  type E >: Int => Int <: Int => String
  val f: Int => Int = (x) => x
  val f2: Int => String = f // Type Mismatch Error
  /* found: (Int => Int); required: Int => String */
} 
\end{lstlisting}

Here is another example that should typecheck, but does not:

\begin{lstlisting}[language=Scala]
trait BadBounds2 {
  type E >: Int => Int <: Int => String
  val f: Int => Int = (x) => x
  val res: String = f(42) // Type Mistmach Error
  /* found: Int; required: String */
}
\end{lstlisting}

The information about the lower and upper bounds is not entirely lost. The compiler still uses it, but only when one of the two types in the subtype check is the abstract type. This patched-up version of the code typechecks:

\begin{lstlisting}[language=Scala]
trait BadBounds {
  type E >: Int => Int <: Int => String
  val f: Int => Int = (x) => x
  (*@\hl{val e: E = f}@*)
  (*@\hl{val f2: Int $=>$ String = e}@*)
}
\end{lstlisting}

Here, the two subtype checks executed are
\begin{itemize}
\item $Int => Int <: E$?
\item $E <: Int => String$?
\end{itemize}

Both of these involve $E$ directly, and so the type bounds are considered during the check. Indeed, inspection of the Dotty code shows it runs an algorithm similar to the one in Figure \ref{dottysub}.

\begin{figure}[h]
\begin{lstlisting}[language=Scala]
def sub(t1: Type, t2: Type): Boolean = {
  val f = t2 match {
    case TypeBounds(l2, u2) => sub(t1, l2)
    ...
  }
  f || t1 match {
    case TypeBounds(l1, u1) => sub(u1, t2)
    ...
  }
}
\end{lstlisting}
\caption{Dotty's subtyping algorithm (baby version)}
\label{dottysub}
\end{figure}

In addition to dropping transitivity, Scala's handling of bad bounds takes exponential time in the worst case. Let $P_N$ denote the following program:

\begin{lstlisting}[language=Scala, escapeinside={/*}{*/}]
trait Foo {
  type /*$T_1 <: T_2$*/; /*$\ldots$*/; type /*$T_{N - 1} <: T_N$*/; type /*$T_N$*/
  type /*$T_{2*N} >: T_{2*N - 1}$*/; /*$\ldots$*/; type /*$T_{N + 2} >: T_{N + 1}$*/; type /*$T_{N + 1}$*/
  val /*$v_1: T_1$*/; val /*$v_{2*N}: T_{2*N} = v_1$*/
}
\end{lstlisting} 

Notice that $T_1 <: T_N$ via a chain of $N$ upper bounds that are discoverable by the subtyping algorithm. The same holds for $T_{N + 1} <: T_{2 * N}$ via lower bounds. However, $T_N$ is not a subtype of $T_{N + 1}$, so a subtype check \code{sub$(T_1, T_{2*N})$} will fail only after at least $N$ nested recursive calls. Since all the calls are eventually unsuccessful, this means there are at least $2^N$ recursive calls. If we plot how long it takes to compile $P_N$ for different values of $N$, we can see that the time increases exponentially (Figure \ref{subexpo}).

\begin{figure}
\begin{tikzpicture}
    \begin{axis}[
        xlabel=N,
        ylabel=Time (s),
        legend style={
          cells={anchor=east},
          legend pos=north west,
        },
        mark size=1.5pt,
    ]
    \addplot[smooth,mark=*,blue] plot coordinates {
        (1, 2)
        (5, 2.7)
        (10, 2.7)
        (11, 4.0)
        (12, 8.0)
        (13, 24.6)
        (14, 90.8)
        (15, 356.1)
        (16, 1360.8)
    };
    \legend{user time}
    \end{axis}
\end{tikzpicture}
\caption{Compiling $P_N$ takes exponential time}
\label{subexpo}
\end{figure}

\section{Formalization}
\label{formal}

In this section, we formalize our approach to algorithmic typing and subtyping for D$_{<:}$. We first define three helper relations: Exposure, Promotion, and Demotion. Using these relations, we then present Step Typing and Step Subtyping, which form a sound, decidable typechecking algorithm for a subset of D$_{<:}$. 

\subsection{Preliminaries}
\label{preliminaries}

We start by making two simplifying assumptions, without loss of generality. First, we use Barendregt's Variable Convention, to avoid having to manually specify $\alpha$-conversions. Second, all our type environments $\Gamma$ are assumed to be \emph{well-formed}, which is implied by the presentation of D$_{<:}$ in \citet{Amin2016} (D$_{<:}$ does not have recursive types). The rules for judgements of well-formedness ($\wfenv{\Gamma}$) are shown in Figure \ref{wellformed}.

\begin{figure}
\begin{flushright}
\fbox{$\wfenv{\Gamma}$}
\end{flushright}

\infrule[W-Cons]
  {\wfenv{\Gamma} \andalso x \not \in \dom{\Gamma} \andalso x \not \in fv(T)}
  {\wfenv{\Gamma, x: T}}
  
\infax[W-Empty]
  {\wfenv{\varnothing}}
\caption{Well-formed environments}
\label{wellformed}
\end{figure}

\subsection{Exposure}
\label{exposure}

The \emph{Exposure }relation $\expos{\Gamma}{T}{T'}$ (Figure \ref{exposrel}) ``gets rid" of path-dependent types. It will later be used in places where the typechecker sees a path-dependent type, but needs a supertype of it that is a function or a type declaration.

We base our Exposure relation in both the Exposure operation present in Kernel F$_{<:}$ \citep{Pierce:2002:TPL:509043} and the treatment of type bounds in Scala. In \citet{Pierce:2002:TPL:509043}, Exposure gives us the least supertype that is not a type variable. We conjecture that the result does not carry through to D$_{<:}$, because in D$_{<:}$ a path-dependent type can be a subtype of multiple types that are unrelated by subtyping:

\begin{lstlisting}[mathescape=true]
$\lambda(a: \tTypeDec{A}{\bot}{\top})$
  $\lambda(b: \tTypeDec{B}{a.A}{\tTypeDec{B_1}{\bot}{\top}})$
    $\lambda(c: \tTypeDec{C}{a.A}{\tTypeDec{C_1}{\bot}{\top}})$
      $w$
\end{lstlisting}

Within $w$, $a.A <: \tTypeDec{B_1}{\bot}{\top}$ and $a.A <: \tTypeDec{C_1}{\bot}{\top}$, but there is probably no subtyping between $\tTypeDec{B_1}{\bot}{\top}$ and $\tTypeDec{C_1}{\bot}{\top}$.

Exposure preserves subtyping (Lemma \ref{exposound}), and terminates (Lemma \ref{exposterm}).

\begin{lemma}[Exposure preserves subtyping]
\label{exposound}
$\expos{\Gamma}{T}{T'} \implies \sub{\Gamma}{T}{T'}$
\end{lemma}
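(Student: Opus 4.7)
The plan is to induct on the derivation of $\expos{\Gamma}{T}{T'}$. Based on the description in the excerpt, Exposure should have a reflexive base case for types that are not path-dependent, plus a recursive case that unfolds a path-dependent type $x.A$ through the upper bound of its declaration in $\Gamma$. Each case in the induction should line up with a subtyping rule that is already available in D$_{<:}$.

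For the base case, where $T$ is not a path-dependent type and $T' = T$, the conclusion $\sub{\Gamma}{T}{T}$ follows from the reflexivity of subtyping. For the inductive case, where $T = x.A$, the exposure derivation must have used a binding $x: \tTypeDec{A}{S}{U} \in \Gamma$ together with a smaller derivation $\expos{\Gamma}{U}{T'}$. Environment lookup (together with well-formedness, which we are assuming throughout per Section \ref{preliminaries}) gives $\typ{\Gamma}{x}{\tTypeDec{A}{S}{U}}$, and then \rn{Sel-<:} yields $\sub{\Gamma}{x.A}{U}$. The induction hypothesis applied to the premise $\expos{\Gamma}{U}{T'}$ gives $\sub{\Gamma}{U}{T'}$, and chaining the two with \rn{Trans} yields $\sub{\Gamma}{x.A}{T'}$.

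The main obstacle is that this argument depends crucially on the exact formulation of Exposure in Figure \ref{exposrel}, which is only described informally in the excerpt. If Exposure has additional rules (for instance, a rule that handles $\bot$ specially, or one that exposes through chains before invoking reflexivity), each such rule must also be matched by a composition of existing subtyping rules; I expect this to be routine, since Exposure is explicitly designed to move upward along the subtype order by following upper bounds. A secondary subtlety is that the lemma's conclusion uses the full declarative subtyping relation, which includes \rn{Trans}; the argument above uses \rn{Trans} freely, so no extra work is needed. If one later wanted a version of this lemma that targets the Step Subtyping relation instead (which, per the introduction, drops transitivity), a more delicate argument would be required, but for the statement as given the induction should be straightforward.
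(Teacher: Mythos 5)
Your proposal takes essentially the same approach as the paper, which proves the lemma by induction on the derivation of $\expos{\Gamma}{T}{T'}$ (the paper gives no further detail). The two wrinkles you anticipate are real but routine: in \rn{X-Path} the premise exposes $\Gamma(x)$ to a type declaration rather than requiring the binding to be one syntactically (handled by the induction hypothesis plus \rn{Sub} before applying \rn{Sel-<:}), and \rn{X-Bot} needs the observation that $\typ{\Gamma}{x}{\bot}$ can be subsumed to $\typ{\Gamma}{x}{\tTypeDec{A}{\top}{\bot}}$ so that \rn{Sel-<:} yields $\sub{\Gamma}{x.A}{\bot}$.
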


\begin{proof}
By induction on a derivation of $\expos{\Gamma}{T}{T'}$.

%
%
%
%
%
\end{proof}

\begin{lemma}[Termination of Exposure]
\label{exposterm}
When viewed as an algorithm, exposure terminates.
\end{lemma}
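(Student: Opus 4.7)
The plan is to exhibit a well-founded measure on the inputs $(\Gamma, T)$ that strictly decreases on every recursive call Exposure performs. The only rule that recurses is the one unfolding a path-dependent type $x.A$; all other cases ($\top$, $\bot$, type declarations, and function types) are base cases that return the argument unchanged. So it suffices to show the measure strictly shrinks each time we unfold a selection.

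The measure will exploit well-formedness of $\Gamma$ (Figure \ref{wellformed}): in $\Gamma = x_1{:}T_1, \ldots, x_n{:}T_n$, each $T_i$ has $fv(T_i) \subseteq \{x_1, \ldots, x_{i-1}\}$. Set $\mathit{idx}_\Gamma(x_i) = i$ and, for any type $T$, define $\mathit{maxidx}_\Gamma(T) = \max\{\mathit{idx}_\Gamma(y) : y \in fv(T)\}$, with the convention $\max\emptyset = 0$. I would take the termination measure to be $\mu_\Gamma(T) = (\mathit{maxidx}_\Gamma(T), \mathrm{size}(T))$, ordered lexicographically on $\mathbb{N} \times \mathbb{N}$, and hence well-founded.

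The critical case is $\expos{\Gamma}{x_i.A}{\cdot}$, which must recursively expose $T_i$ to some type declaration $\tTypeDec{A}{S}{U}$ and may then further expose $U$. Well-formedness gives $\mathit{maxidx}_\Gamma(T_i) \le i-1 < i = \mathit{maxidx}_\Gamma(x_i.A)$, so the measure strictly decreases on the first recursive call. For the call on $U$, we need an auxiliary invariant: every type produced by Exposure in $\Gamma$ has all its free variables bounded, in index, by the maximum index of the input's free variables---that is, Exposure does not invent new free variables beyond those reachable through $\Gamma$. A straightforward induction over the Exposure rules establishes this invariant, from which $\mathit{maxidx}_\Gamma(U) \le i - 1 < i$, and the measure again decreases strictly.

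The main obstacle, as I see it, is pinning down the precise form of this auxiliary free-variable invariant: its statement depends on the specific rules in Figure \ref{exposrel}, and any rule that traverses into or substitutes under a binder complicates the bookkeeping. Once the invariant is in place, a standard well-founded induction on $\mu_\Gamma$ concludes termination.
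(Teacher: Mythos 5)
Your proof is correct and follows essentially the same route as the paper's: the paper's one-line argument appeals to the acyclicity of well-formed environments (each $T_i$ in $\Gamma$ may only mention earlier-bound variables), and your lexicographic measure on the maximal variable index simply makes that observation precise, including the auxiliary invariant needed for the second recursive call in the \rn{X-Path} rule. The only remark worth adding is that the size component of your measure is not actually needed, since every recursive call of Exposure originates from a path-dependent type and strictly decreases the index component alone.
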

\begin{proof}
Follows from the fact that  $\Gamma = \Gamma_1, x: T, \Gamma_2$ means that $x \not \in fv(T)$ and $x$ does not show up free in $\Gamma_1$, since $\wfenv{\Gamma}$ (well-formed environments do not have cycles).
\end{proof}

\begin{figure}
\begin{flushright}
\fbox{$\expos{\Gamma}{T}{T'}$}
\end{flushright}

\infrule[X-Bot]
  {\Gamma(x) = T \andalso \expos{\Gamma}{T}{\bot}}
  {\expos{\Gamma}{x.A}{\bot}}
\infrule[X-Path]
  {\Gamma(x) = T \andalso \expos{\Gamma}{T}{\tTypeDec{A}{S}{U}} \andalso \expos{\Gamma}{U}{V}}
  {\expos{\Gamma}{x.A}{V}}

\infrule[X-Other]
  {T \mbox{ is not a path-dependent type}}
  {\expos{\Gamma}{T}{T}}
  
\caption{Exposure}
\label{exposrel}
\end{figure}

\subsection{Promotion and Demotion}
\label{promodemo}

In Section \ref{steptyping}, we will sometimes need to remove all references to a specific variable from a type. The \emph{Promotion} (Figure \ref{promoop}) and \emph{Demotion} relations (Figure \ref{demoop}), adapted from \citet{pierce2000local}, accomplish this. They remove all occurrences of the specified free variable from a type (Lemma \ref{correctpromodemo}), preserve subtyping (Lemma \ref{soundpromodemo}), and terminate (Lemma \ref{termpromodemo}).

\begin{figure}
\begin{flushright}
\fbox{$\promo{\Gamma}{T}{x}{T'}$}
\end{flushright}

\infrule[P-Up]
  {\Gamma(x) = T \andalso \expos{\Gamma}{T}{\tTypeDec{A}{L}{U}}}
  {\promo{\Gamma}{x.A}{x}{U}}

\infrule[P-Up-Bot]
  {\Gamma(x) = T \andalso \expos{\Gamma}{T}{\bot}}
  {\promo{\Gamma}{x.A}{x}{\bot}}

\infrule[P-Lam]
  {\demo{\Gamma}{S}{x}{S'} \andalso \promo{\Gamma, y: S'}{T}{x}{T'} \andalso y \ne x}
  {\promo{\Gamma}{\forall(y: S)T}{x}{\forall(y: S')T'}}  
  
\infrule[P-Var]
  {y \ne x}
  {\promo{\Gamma}{y.A}{x}{y.A}}
  
\infax[P-Bot]
  {\promo{\Gamma}{\bot}{x}{\bot}}

\infax[P-Top]
  {\promo{\Gamma}{\top}{x}{\top}} 
  
\infrule[P-Decl]
  {\demo{\Gamma}{L}{x}{L'} \andalso \promo{\Gamma}{U}{x}{U'}}
  {\promo{\Gamma}{\tTypeDec{A}{L}{U}}{x}{\tTypeDec{A}{L'}{U'}}} 

\infax[P-Cap]
  {\promo{\Gamma}{\forall(x: S)T}{x}{\forall(x: S)T}}

\caption{Promotion}
\label{promoop}
\end{figure}

\begin{figure}
\begin{flushright}
\fbox{$\demo{\Gamma}{T}{x}{T'}$}
\end{flushright}

\infrule[D-Down]
  {\Gamma(x) = T \andalso \expos{\Gamma}{T}{\tTypeDec{A}{L}{U}}}
  {\demo{\Gamma}{x.A}{x}{L}}

\infrule[D-Down-Bot]
  {\Gamma(x) = T \andalso \expos{\Gamma}{T}{\bot}}
  {\demo{\Gamma}{x.A}{x}{\top}} 

\infrule[D-Lam]
  {\promo{\Gamma}{S}{x}{S'} \andalso \demo{\Gamma, y: S}{T}{x}{T'} \\
  y \ne x}
  {\demo{\Gamma}{\forall(y: S)T}{x}{\forall(y: S')T'}}
  
\infrule[D-Var]
  {y \ne x}
  {\demo{\Gamma}{y.A}{x}{y.A}}
  
\infax[D-Bot]
  {\demo{\Gamma}{\bot}{x}{\bot}}

\infax[D-Top]
  {\demo{\Gamma}{\top}{x}{\top}}
  
\infrule[D-Decl]
  {\promo{\Gamma}{L}{x}{L'} \andalso \demo{\Gamma}{U}{x}{U'}}
  {\demo{\Gamma}{\tTypeDec{A}{L}{U}}{x}{\tTypeDec{A}{L'}{U'}}}

\infax[D-Cap]
  {\demo{\Gamma}{\forall(x: S)T}{x}{\forall(x: S)T}}
  
\caption{Demotion}
\label{demoop}
\end{figure}

\begin{lemma}[Correctness of Promotion and Demotion]
\label{correctpromodemo}
If $\promo{\Gamma}{T}{x}{T'}$ or $\demo{\Gamma}{T}{x}{T'}$, then $x \not \in fv(T')$.
\end{lemma}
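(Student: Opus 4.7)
The plan is to proceed by mutual induction on the structure of the derivations of $\promo{\Gamma}{T}{x}{T'}$ and $\demo{\Gamma}{T}{x}{T'}$, since the two relations are defined in terms of each other (e.g. P-Lam uses demotion on the domain and promotion on the codomain). The base cases P-Bot, P-Top, D-Bot, D-Top are immediate because $\bot$ and $\top$ contain no free variables, and the cases P-Var and D-Var are dispatched by the side condition $y \ne x$. In the structural cases P-Lam, D-Lam, P-Decl, D-Decl, the IHs applied to each premise give the absence of $x$ in each component of $T'$, and these combine to yield $x \not\in fv(T')$; in P-Lam/D-Lam the bound variable $y$ additionally satisfies $y \ne x$, so the binder introduces no complication. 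For P-Cap and D-Cap, the result is $\forall(x: S)T$ itself, which could in principle have $x$ free in $S$; Barendregt's Variable Convention (invoked in Section \ref{preliminaries}) lets us alpha-rename so that the binder is distinct from the free $x$ under consideration, making these rules effectively vacuous.

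The interesting cases are P-Up, P-Up-Bot, D-Down, and D-Down-Bot, which conclude $x \not\in fv(V)$ where $V$ comes from exposing the type $T$ bound to $x$ in $\Gamma$. This does not follow from the induction alone, so I would first prove a helper lemma: if $\Gamma = \Gamma_1, x{:}\,T, \Gamma_2$ is well-formed and $\expos{\Gamma}{T}{T'}$, then $x \not\in fv(T')$. The helper proceeds by induction on the exposure derivation, using that well-formedness (W-Cons) forces $x \not\in fv(T)$ and $x \not\in fv(S)$ for every $y{:}\,S$ in $\Gamma_1$; consequently any recursive exposure step walks only through variables bound strictly before $x$, and never reintroduces $x$. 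Once the helper is in hand, the four path-dependent cases close immediately: the type $U$ (or $\bot$) extracted from the exposed type declaration is a subterm of some $T'$ with $x \not\in fv(T')$.

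The main obstacle will be stating the exposure helper tightly enough for the induction to go through—specifically, generalising from "the type to which $x$ is bound" to "any type whose free variables are all drawn from $\Gamma_1$," so that recursive calls to exposure preserve the invariant. This is exactly the acyclicity property already exploited in Lemma \ref{exposterm} for termination of exposure, so I would reuse the same structural observation. Everything else is straightforward case analysis, and I do not foresee any genuinely subtle reasoning once the exposure helper is established.
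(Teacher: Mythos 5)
Your proof is correct and takes essentially the same route as the paper's (mutual induction on the promotion/demotion derivations); the paper's one-line proof leaves implicit the exposure helper lemma you identify, which is indeed the only non-trivial ingredient, and your generalisation of it to types whose free variables lie in the prefix of $\Gamma$ before $x$ is the right way to make that auxiliary induction go through. Your dispatch of the P-Cap/D-Cap cases via Barendregt's convention and of P-Up/D-Down via well-formedness of $\Gamma$ is exactly what is needed.
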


\begin{proof}
By induction on a derivation of $\promo{\Gamma}{T}{x}{T'}$ or $\demo{\Gamma}{T}{x}{T'}$.
\end{proof}

\begin{lemma}[Promotion and Demotion preserve subtyping]
\label{soundpromodemo}
$\promo{\Gamma}{T}{x}{T'} \implies \sub{\Gamma}{T}{T'}$ and $\demo{\Gamma}{T}{x}{T'} \implies \sub{\Gamma}{T'}{T}$
\end{lemma}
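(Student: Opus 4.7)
The plan is to prove both statements simultaneously by mutual induction on the combined derivations of $\promo{\Gamma}{T}{x}{T'}$ and $\demo{\Gamma}{T}{x}{T'}$. Mutual induction is essential because the rules cross-reference each other at contravariant positions: P-Lam invokes demotion on the argument type of a function, and D-Lam invokes promotion there; similarly P-Decl/D-Decl flip polarities on the lower bound of a type declaration. A single induction on either relation alone would not provide the right inductive hypothesis when we cross such a polarity switch.

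The reflexive cases P-Var, P-Bot, P-Top, P-Cap (and their demotion duals) are immediate from reflexivity of subtyping, since the output type equals the input. The interesting base cases are P-Up, P-Up-Bot, D-Down, and D-Down-Bot. For P-Up, from $\Gamma(x) = T$ and $\expos{\Gamma}{T}{\tTypeDec{A}{L}{U}}$, Lemma \ref{exposound} gives $\sub{\Gamma}{T}{\tTypeDec{A}{L}{U}}$; combined with variable lookup and subsumption this yields $\typ{\Gamma}{x}{\tTypeDec{A}{L}{U}}$, and then rule Sel-<: produces $\sub{\Gamma}{x.A}{U}$ as required. For P-Up-Bot, the same chain gives $\typ{\Gamma}{x}{\bot}$, which can be subsumed to $\typ{\Gamma}{x}{\tTypeDec{A}{\top}{\bot}}$ (since $\bot$ is a subtype of every type declaration), and Sel-<: then yields $\sub{\Gamma}{x.A}{\bot}$. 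The D-Down and D-Down-Bot cases are symmetric, concluding with rule <:-Sel instead.

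The structural cases assemble subtyping derivations from the inductive hypotheses using the standard subtyping rules for function types and type declarations. For P-Lam, the IH applied to the demotion subderivation yields $\sub{\Gamma}{S'}{S}$ (note the reversal of direction, which is exactly the contravariant premise we need), while the IH applied to the promotion subderivation yields $\sub{\Gamma, y: S'}{T}{T'}$, and these combine via D$_{<:}$'s dependent-function subtyping rule to give $\sub{\Gamma}{\forall(y: S)T}{\forall(y: S')T'}$. The P-Decl case decomposes analogously: a contravariant IH call on the lower bound and a covariant one on the upper bound, combined with the rule for subtyping type declarations. The demotion structural cases are dual.

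The main obstacle I anticipate is keeping the polarities straight: one must verify at each inductive step that the direction of the subtyping judgement delivered by the IH matches the direction required by the subtyping rule being applied, especially for the demotion hypotheses inside promotion cases (and vice versa). Once the mutual induction is set up so that the promotion statement delivers $\sub{\Gamma}{T}{T'}$ and the demotion statement delivers $\sub{\Gamma}{T'}{T}$, the polarity flips in P-Lam, D-Lam, P-Decl, and D-Decl line up automatically with the premises of the corresponding subtyping rules.
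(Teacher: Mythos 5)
Your proposal is correct and takes essentially the same route as the paper, which proves the lemma by (mutual) induction on the promotion/demotion derivations; you have simply filled in the details, including the correct use of Lemma \ref{exposound} plus subsumption and \rn{Sel-<:}/\rn{<:-Sel} in the base cases and the correct polarity bookkeeping in \rn{P-Lam}, \rn{D-Lam}, \rn{P-Decl}, and \rn{D-Decl}.
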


\begin{proof}
By induction on a derivation of $\promo{\Gamma}{T}{x}{T'}$ or $\demo{\Gamma}{T}{x}{T'}$.
\end{proof}

\begin{lemma}[Termination of Promotion and Demotion]
\label{termpromodemo}
When viewed as an algorithm, both promotion and demotion terminate.
\end{lemma}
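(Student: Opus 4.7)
The plan is to establish termination by strong mutual induction on the structural size of the input type $T$, leveraging termination of Exposure (Lemma \ref{exposterm}). Write $|T|$ for the number of type constructors in $T$ (so $|\top| = |\bot| = |y.A| = 1$, $|\tTypeDec{A}{L}{U}| = 1 + |L| + |U|$, and $|\forall(y\!:\!S)U| = 1 + |S| + |U|$), under which every proper subterm of $T$ has strictly smaller size.

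First I would observe that the rules in Figures \ref{promoop} and \ref{demoop} are syntax-directed on $T$: its shape selects at most one rule, with the choice between the \rn{P-Up}/\rn{D-Down} and \rn{P-Up-Bot}/\rn{D-Down-Bot} pairs determined by the output of Exposure on $\Gamma(x)$. I would then do a case analysis on the shape of $T$. In the base cases ($T = \top$, $T = \bot$, $T = y.A$ with $y \ne x$, or $T = \forall(x\!:\!S)U$), no recursive call to Promotion or Demotion is issued. For $T = x.A$, the only recursive call is to Exposure of $\Gamma(x)$, which terminates by Lemma \ref{exposterm}; the resulting bound is then returned directly (as $U$, $L$, $\bot$, or $\top$), without any further Promotion or Demotion call. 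For $T = \tTypeDec{A}{L}{U}$ and for $T = \forall(y\!:\!S)U$ with $y \ne x$, the recursive calls are all on strict subterms of $T$ (to $L$, $U$, or $S$, possibly with the dual operation), so the mutual induction hypothesis applies. The growth of the environment in the $\forall$ case is immaterial, since the measure ignores $\Gamma$ and Exposure terminates on any well-formed environment.

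The one subtlety worth flagging---arguably the main obstacle---is that \rn{P-Up} and \rn{D-Down} do not recurse on the exposed bound. The bound $U$ (respectively $L$) returned by Exposure can be arbitrarily large and need not be a subterm of $x.A$, but it is the \emph{output} of the promotion rather than a new subgoal; hence a plain structural induction on $T$ suffices, with no need for a lexicographic or multiset measure combining the sizes of $T$ and $\Gamma$.
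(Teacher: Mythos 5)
Your proof is correct and takes essentially the same route as the paper: the paper's argument is exactly that the size of the type being promoted or demoted is a termination measure, with Lemma \ref{exposterm} covering the calls to Exposure. Your additional observation that the exposed bound in \rn{P-Up}/\rn{D-Down} is an output rather than a new subgoal is the right justification for why plain structural size suffices.
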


\begin{proof}
Uses Lemma \ref{exposterm}. The size of the term we are promoting or demoting is a termination measure.
\end{proof}

\subsection{Step Typing}
\label{steptyping}
We can now define Step Typing. The typing rules are shown in Figure \ref{steptypingrules}. Differences with the calculus in \citet{Amin2016} are highlighted.

There are two rules in the standard typing relation that are not syntax-directed: \rn{Sub}, which is needed when typing function applications, and \rn{Let}, for typing let-expressions.

\infrule[Sub]
  {\typ{\Gamma}{t}{T} \andalso \sub{\Gamma}{T}{U}}
  {\typ{\Gamma}{t}{U}}

\infrule[Let]
  {\typ{\Gamma}{t}{T} \andalso \typ{\Gamma, x: T}{u}{U} \\
  x \not \in fv(U)}
  {\typ{\Gamma}{\textbf{let } x = t \textbf{ in } u: U}}

\text{\rn{Sub}} is not syntax-directed because the typechecker needs to ``guess" the type $U$ in the conclusion. Similarly, \rn{Let} forces us to guess a type $U$ where $x$ is not free.

To fix these issues, Step Typing differs from the standard typing relation in two ways:
\begin{itemize}
\item It drops the subsumption rule: instead, when typing a function application, Step Typing uses Exposure to find a function type (or $\bot$) for the term in the function position.
\item Additionally, it uses Promotion to remove all references to the bound variable in the returned type of a let-expression.
\end{itemize}

\begin{figure}
\begin{flushright}
\fbox{$\styp{\Gamma}{t}{T}$}
\end{flushright}

\infrule[T-Var]{\Gamma(x) = T}{\styp{\Gamma}{x}{T}}

\infrule[T-All-I]
  {\styp{\Gamma, x \colon T}{t}{U} \andalso x \not \in \text{fv}(T)}
  {\styp{\Gamma}{\lambda (x \colon T) t}{\forall (x \colon T) U}}

\infax[T-Typ-I]
  {\styp{\Gamma}{\{A = T\}}{\tTypeDec{A}{T}{T}}}

\newruletrue
\infrule[T-All-E]
  {\styp{\Gamma}{x}{V} \andalso \expos{\Gamma}{V}{\forall (z \colon S)T} \\ \styp{\Gamma}{y}{U} \andalso \ssub{\Gamma}{U}{S}}
  {\styp{\Gamma}{x \ y}{[z := y] T}}
\newrulefalse

\newruletrue
\infrule[T-App-Bot]
  {\styp{\Gamma}{x}{T} \andalso \expos{\Gamma}{T}{\bot} \\ \styp{\Gamma}{y}{U}}
  {\styp{\Gamma}{x \ y}{\bot}}
\newrulefalse

\newruletrue
\infrule[T-Let]
  {\styp{\Gamma}{t}{T} \andalso \styp{\Gamma, x \colon T}{u}{U'} \\
  \promo{\Gamma, x \colon T}{U'}{x}{U}}
  {\styp{\Gamma}{\textbf{let} \  x = t \ \textbf{in} \ u}{U}}
\newrulefalse

\caption{Step Typing}
 
\label{steptypingrules}
\end{figure}

\subsection{Step Subtyping}
\label{stepsubtyping}
The standard subtyping relation requires three changes: the first two to make the rules syntax-directed, and the last one to guarantee termination:
\begin{itemize}
\item We drop the general reflexivity rule, replacing it with reflexivity of only path-dependent types. General reflexivity still holds, just not as an axiom (Lemma \ref{reflexsub}).

\item Transitivity goes away: instead, we use Exposure when comparing path-dependent types (like in Scala).

\item So that the algorithm terminates, we only allow subtyping between function types with the same argument type (as opposed to the standard contravariant rule). This is the same restriction used to make Kernel F$_{<:}$ decidable \citet{cardelli1985understanding}.
\end{itemize}

The rules for Step Subtyping are shown in Figure \ref{stepsubrules}.

\begin{figure}
\begin{flushright}
\fbox{$\ssub{\Gamma}{S}{T}$}
\end{flushright}

\infax[S-Bot]{\ssub{\Gamma}{\bot}{T}}

\infax[S-Top]{\ssub{\Gamma}{T}{\top}}

\infrule[S-Typ-$<:$-Typ]
  {\ssub{\Gamma}{S_2}{S_1} \andalso \ssub{\Gamma}{T_1}{T_2}}
  {\ssub{\Gamma}{\tTypeDec{A}{S_1}{T_1}}{\tTypeDec{A}{S_2}{T_2}}}

\newruletrue
\infax[S-Refl]{\ssub{\Gamma}{x.A}{x.A}}
\newrulefalse

\newruletrue
\infrule[S-$<:$-Sel]
  {\Gamma(x) = T \andalso \expos{\Gamma}{T}{\tTypeDec{A}{S_1}{S_2}} \andalso \ssub{\Gamma}{S_2}{U}}
  {\ssub{\Gamma}{x.A}{U}}  
\newrulefalse

\newruletrue  
\infrule[S-Sel-$<:$]
  {\Gamma(x) = T \andalso \expos{\Gamma}{T}{\tTypeDec{A}{S_1}{S_2}} \andalso \ssub{\Gamma}{U}{S_1}}
  {\ssub{\Gamma}{U}{x.A}}  
\newrulefalse

\newruletrue  
\infrule[S-Bot-$<:$]
  {\Gamma(x) = T \andalso \expos{\Gamma}{T}{\bot}}
  {\ssub{\Gamma}{U}{x.A}}  
\newrulefalse

\newruletrue
\infrule[S-$<:$-Bot]
  {\Gamma(x) = T \andalso \expos{\Gamma}{T}{\bot}}
  {\ssub{\Gamma}{x.A}{U}}
\newrulefalse
  
\newruletrue
\infrule[S-All-$<:$-All]
  {\ssub{\Gamma, x \colon S}{T_1}{T_2}}
  {\ssub{\Gamma}{\forall (x \colon S) T_1}{\forall (x \colon S) T_2}}
\newrulefalse

\caption{Step Subtyping}
\label{stepsubrules}
\end{figure}

\subsection{Metatheoretic Properties}
\label{metaprops}

We now summarize the metatheoretic properties of Step Typing and Subtyping:
\begin{itemize}

\item \textbf{Soundness}: Step Typing and Subtyping are sound with respect to the standard typing and subtyping relations of D$_{<:}$ (Theorem \ref{thmsoundness}).

\item \textbf{Decidability}: Both relations are decidable (Theorem \ref{typandsubdecidable}).

\item \textbf{Completeness}: the relations are not complete. In fact, no algorithm relation can be complete, since typing D$_{<:}$ is undecidable. Any program that relies on a combination of bad bounds and transitivity to typecheck will fail to do so.

\item \textbf{Subject Reduction}: we do not currently know whether the subject-reduction property holds for Step Typing. This means we could have $\styp{\Gamma}{t}{T}$ and $t \red t'$, but $t'$ can only be typed under the standard typing relation, and not Step Typing.
\end{itemize}

These results are formalized below.

The weight function, adapted from \citet{Pierce:2002:TPL:509043}}, will serve as a termination measure for subtyping.

\begin{defn}[Weight]
The weight of type $T$ in context $\Gamma$, written $\weight{\Gamma}{T}$, is given by the equations below:
\[
\begin{array}{l}
  \weight{\Gamma}{\top} = 1 \\
  \weight{\Gamma}{\bot} = 1 \\
  \weight{\Gamma}{\tTypeDec{A}{S}{T}} = 1 + \max(\weight{\Gamma}{S}, \weight{\Gamma}{T}) \\
  \weight{\Gamma}{x.A} = 1 + \weight{\Gamma_1}{T} \text{ if } \Gamma = \Gamma_1, x: T, \Gamma_2 \\
  \weight{\Gamma}{\forall(x: S)T} = 1 + \weight{\Gamma, x: S}{T}
\end{array}
\]
\end{defn}

\begin{lemma}[Monotonicity of Exposure]
\label{expo-weight-mono}
$\expos{\Gamma}{T}{T'} \implies \weight{\Gamma}{T} \ge \weight{\Gamma}{T'}$
\end{lemma}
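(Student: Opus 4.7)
The plan is to induct on the derivation of $\expos{\Gamma}{T}{T'}$, with one case per exposure rule. The case \rn{X-Other} is immediate since $T' = T$. The case \rn{X-Bot} is also easy: $\weight{\Gamma}{x.A} = 1 + \weight{\Gamma_1}{T} \ge 2 > 1 = \weight{\Gamma}{\bot}$, regardless of the premises.

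The interesting case is \rn{X-Path}, where we derive $\expos{\Gamma}{x.A}{V}$ from $\Gamma(x) = T$, $\expos{\Gamma}{T}{\tTypeDec{A}{S}{U}}$, and $\expos{\Gamma}{U}{V}$. Writing $\Gamma = \Gamma_1, x : T, \Gamma_2$, the definition gives $\weight{\Gamma}{x.A} = 1 + \weight{\Gamma_1}{T}$, while the inductive hypotheses are phrased in the full context $\Gamma$. Before the IHs are applicable, I need a small weakening fact: if $fv(T) \subseteq \dom{\Gamma_1}$, then $\weight{\Gamma_1}{T} = \weight{\Gamma}{T}$. Well-formedness of $\Gamma$ (assumed throughout the paper, see Section \ref{preliminaries}) guarantees that the free variables of $T$ lie in $\Gamma_1$, so every path-dependent subterm $y.A$ of $T$ looks up the same binding whether measured in $\Gamma_1$ or in $\Gamma$. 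A straightforward structural induction on $T$ then gives the equality.

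With the weakening fact in hand, applying the IH to $\expos{\Gamma}{T}{\tTypeDec{A}{S}{U}}$ yields $\weight{\Gamma}{T} \ge 1 + \max(\weight{\Gamma}{S}, \weight{\Gamma}{U}) \ge 1 + \weight{\Gamma}{U}$, and applying the IH to $\expos{\Gamma}{U}{V}$ yields $\weight{\Gamma}{U} \ge \weight{\Gamma}{V}$. Chaining these inequalities gives $\weight{\Gamma}{x.A} = 1 + \weight{\Gamma}{T} \ge 2 + \weight{\Gamma}{V} \ge \weight{\Gamma}{V}$, as required.

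The main obstacle is the context mismatch in the \rn{X-Path} case: the weight of $x.A$ is defined via the prefix $\Gamma_1$ that precedes the binding of $x$, whereas the exposure premises live in the full context $\Gamma$. Once the weakening lemma is isolated and justified by well-formedness, the rest is a mechanical chain of inequalities. Everything else (the \rn{X-Bot} and \rn{X-Other} cases) is immediate from the definition of $\weight{\cdot}{\cdot}$.
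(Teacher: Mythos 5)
Your proof is correct and follows the same strategy as the paper, which simply states ``by induction on a derivation of $\expos{\Gamma}{T}{T'}$, using $\wfenv{\Gamma}$''. The context-mismatch in the \rn{X-Path} case and the weakening fact $\weight{\Gamma_1}{T} = \weight{\Gamma}{T}$ (justified by well-formedness) are exactly the details the paper's one-line proof leaves implicit, and you have filled them in correctly.
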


\begin{proof}
By induction on a derivation of $\expos{\Gamma}{T}{T'}$, using $\wfenv{\Gamma}$.
\end{proof}

\begin{theorem}[Soundness of Step Typing and Subtyping]{thm}
\label{thmsoundness}
$\styp{\Gamma}{t}{T} \implies \typ{\Gamma}{t}{T}$ and $\ssub{\Gamma}{S}{U} \implies \sub{\Gamma}{S}{U}$
\end{theorem}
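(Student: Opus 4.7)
The plan is to prove the two implications by mutual induction on derivations of $\styp{\Gamma}{t}{T}$ and $\ssub{\Gamma}{S}{U}$, invoking the previously established soundness properties of the helper relations (Lemmas \ref{exposound}, \ref{correctpromodemo}, and \ref{soundpromodemo}) to bridge each non-standard premise to a standard one. For every Step rule, the goal is to construct a derivation in the declarative relations by inserting \rn{Sub} (and sometimes \rn{Trans}) wherever the algorithmic rule has pushed the coercion work onto Exposure or Promotion.

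For the subtyping cases, \rn{S-Bot}, \rn{S-Top}, \rn{S-Typ-$<:$-Typ}, and \rn{S-Refl} follow either immediately from their declarative counterparts (using reflexivity for \rn{S-Refl}) or from a direct application of the induction hypotheses. The four exposure-based rules \rn{S-$<:$-Sel}, \rn{S-Sel-$<:$}, \rn{S-Bot-$<:$}, \rn{S-$<:$-Bot} are treated uniformly: from $\Gamma(x) = T$ I get $\typ{\Gamma}{x}{T}$ by \rn{T-Var}, from Lemma \ref{exposound} I get $\sub{\Gamma}{T}{\tTypeDec{A}{S_1}{S_2}}$ (or $\sub{\Gamma}{T}{\bot}$), and then \rn{Sub} yields a typing judgement for $x$ at the exposed type, to which the declarative \rn{Sel-$<:$}/\rn{$<:$-Sel} rule can be applied; one final application of \rn{Trans} combines this with the induction hypothesis. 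The Kernel-style \rn{S-All-$<:$-All} case is discharged by invoking the declarative $\forall$-rule with the two argument types equal and using reflexivity of subtyping for the contravariant premise.

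For the typing cases, \rn{T-Var}, \rn{T-All-I}, and \rn{T-Typ-I} map directly onto their declarative counterparts. In \rn{T-All-E}, the induction hypothesis gives $\typ{\Gamma}{x}{V}$; exposure soundness plus \rn{Sub} lift this to $\typ{\Gamma}{x}{\forall(z:S)T}$, and the subtyping induction hypothesis combined with \rn{Sub} lifts $\typ{\Gamma}{y}{U}$ to $\typ{\Gamma}{y}{S}$, after which the declarative application rule concludes. \rn{T-App-Bot} is handled analogously, coercing $x$ to $\bot$ via exposure soundness and then to any desired function type (since $\bot$ is a universal subtype), so that application yields $\bot$. The \rn{T-Let} case uses Lemma \ref{soundpromodemo} to get $\sub{\Gamma, x:T}{U'}{U}$ and Lemma \ref{correctpromodemo} to ensure $x \notin fv(U)$; \rn{Sub} promotes the body's type to $U$ and the declarative \rn{Let} rule then applies.

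The main obstacle is the tight interleaving of typing and subtyping through Exposure: the subtyping cases only work because we can reconstruct a declarative typing of $x$ from $\Gamma(x)=T$, and this in turn forces the mutual induction to carry a strong enough hypothesis on subtyping to feed back into the typing cases \rn{T-All-E} and \rn{T-App-Bot}. A secondary subtlety is that in \rn{T-App-Bot} one must use that $\bot$ subtypes every function type in order to rephrase the algorithmic conclusion $[z:=y]\bot = \bot$ as a legal declarative application; this is where the asymmetry between the algorithmic and declarative treatments of $\bot$ in the function position has to be explicitly reconciled.
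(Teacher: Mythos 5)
Your proposal is correct and follows essentially the same route as the paper: mutual induction on the Step Typing/Subtyping derivations, handling only the new rules by combining Lemmas \ref{exposound}, \ref{correctpromodemo}, and \ref{soundpromodemo} with the declarative \rn{Sub}, \rn{Trans}, and reflexivity to reconstruct standard derivations. The only detail worth making explicit is that in \rn{S-Bot-$<:$}, \rn{S-$<:$-Bot}, and \rn{T-App-Bot}, after obtaining $\typ{\Gamma}{x}{\bot}$ you need one further \rn{Sub} step through $\bot <: \tTypeDec{A}{\top}{\bot}$ (resp.\ $\bot <: \forall(z:\top)\bot$) before the declarative \rn{Sel-$<:$}/\rn{$<:$-Sel}/application rule becomes applicable --- a move you do acknowledge for the application case.
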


\begin{proof}
By mutual induction on a derivation of $\styp{\Gamma}{t}{T}$ or $\ssub{\Gamma}{S}{U}$. We only need to consider the new rules.

%
%
%
%
%
%
%
\end{proof}

\begin{theorem}[Decidability of Step Typing and Subtyping]
\label{typandsubdecidable}
 Step Typing and Subtyping are decidable.
\end{theorem}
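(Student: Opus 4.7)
The plan is to show that both Step Typing and Step Subtyping, viewed as algorithms, terminate on all inputs; since every rule is syntax-directed on the conclusion, termination together with determinism yields decidability. Calls to Exposure, Promotion, and Demotion already terminate by Lemmas \ref{exposterm} and \ref{termpromodemo}, so I would treat them as finitary side computations and focus the argument on the two main relations.

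For Step Typing I would proceed by induction on the size of the term $t$. T-Var and T-Typ-I are axioms; T-All-I and T-Let recurse only on strictly smaller subterms; and T-All-E and T-App-Bot recurse only on variables, which are base cases. The non-typing premises — exposure checks, the subtyping check against $S$ in T-All-E, and the promotion in T-Let — are discharged by termination of the auxiliary relations and by the subtyping argument below.

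For Step Subtyping I would use the termination measure $\mu(\Gamma, S, T) = \weight{\Gamma}{S} + \weight{\Gamma}{T}$ and verify that every rule strictly decreases $\mu$ on its recursive subtyping premises. S-Bot, S-Top, and S-Refl are axioms; S-Bot-$<:$ and S-$<:$-Bot have only exposure premises. S-Typ-$<:$-Typ strips off the outer declaration from both sides, cutting $\mu$ by at least two. S-All-$<:$-All strips off an outer $\forall(x{:}S)$, and by definition $\weight{\Gamma}{\forall(x{:}S)T_i} = 1 + \weight{\Gamma, x{:}S}{T_i}$. The crucial case is S-$<:$-Sel: writing $\Gamma = \Gamma_1, x{:}T, \Gamma_2$,
\[
\weight{\Gamma}{x.A} \;=\; 1 + \weight{\Gamma_1}{T} \;\ge\; 1 + \weight{\Gamma}{\tTypeDec{A}{S_1}{S_2}} \;>\; \weight{\Gamma}{S_2},
\]
where the middle step uses monotonicity of Exposure (Lemma \ref{expo-weight-mono}) after identifying $\weight{\Gamma_1}{T}$ with $\weight{\Gamma}{T}$. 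The symmetric S-Sel-$<:$ case is analogous, and the exposure premises inside these rules are handled by Lemma \ref{exposterm}.

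The main obstacle is the weight-invariance observation just flagged: one must show that if $T$ is well-formed under a prefix $\Gamma_1$ of $\Gamma$, then $\weight{\Gamma_1}{T} = \weight{\Gamma}{T}$, and similarly for types produced by exposing $T$. This reduces to the fact that $\wfenv{\Gamma}$ forbids forward references, so $T$ — and any type produced by chasing bounds recorded in $\Gamma_1$ via Exposure — mentions only variables bound in $\Gamma_1$; the weight computation on such a type bottoms out identically in both contexts. Once this bookkeeping lemma is in place, induction on term size handles Step Typing, the measure $\mu$ handles Step Subtyping, and decidability follows.
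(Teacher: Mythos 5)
Your proposal is correct and follows essentially the same route as the paper: syntax-directedness reduces decidability to termination, term size handles Step Typing, and the measure $\weight{\Gamma}{S} + \weight{\Gamma}{T}$ handles Step Subtyping, with Lemma \ref{expo-weight-mono} discharging the S-$<:$-Sel and S-Sel-$<:$ cases. The weight-invariance bookkeeping lemma you flag (that $\weight{\Gamma_1}{T} = \weight{\Gamma}{T}$ for $T$ well-formed under the prefix $\Gamma_1$) is a genuine detail the paper's two-sentence proof elides, and your identification of it as the only nontrivial obstacle is accurate.
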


\begin{proof}
Since Step Typing and Subtyping are syntax-directed, to prove decidability we need to argue that they terminate.

The size of the term under consideration is a termination measure for Step Typing. For Step Subtyping, define $\weightsub{\Gamma}{S}{T} = \weight{\Gamma}{S} + \weight{\Gamma}{T}$. We can show that in the Step Subtyping rules, the weight of the conclusions is always strictly greater than the weight of the premises.

%
%
%
%
%
\end{proof}

\begin{lemma}[Reflexivity of Step Subtyping]
\label{reflexsub}
$\forall T, \ssub{\Gamma}{T}{T}$
\end{lemma}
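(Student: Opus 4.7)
The plan is a straightforward structural induction on $T$, using one Step Subtyping rule per syntactic form. Since Step Subtyping no longer has a general \rn{Refl} axiom, we must check that every type has a way to be derived as a subtype of itself, built from the remaining rules plus the path-dependent \rn{S-Refl}.

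I would proceed case by case on the shape of $T$. If $T = \top$, apply \rn{S-Top}; if $T = \bot$, apply \rn{S-Bot}; if $T = x.A$, apply the new \rn{S-Refl} directly. For $T = \tTypeDec{A}{S}{U}$, the induction hypothesis gives $\ssub{\Gamma}{S}{S}$ and $\ssub{\Gamma}{U}{U}$, and then \rn{S-Typ-$<:$-Typ} assembles them into $\ssub{\Gamma}{\tTypeDec{A}{S}{U}}{\tTypeDec{A}{S}{U}}$ (note that the contravariant slot is handled symmetrically because the lower and upper bounds coincide on both sides). For $T = \forall(x: S)U$, the induction hypothesis applied to $U$ in the extended context yields $\ssub{\Gamma, x: S}{U}{U}$, and then \rn{S-All-$<:$-All} gives $\ssub{\Gamma}{\forall(x: S)U}{\forall(x: S)U}$; here the choice to restrict the kernel-style rule to the same argument type $S$ on both sides is exactly what makes reflexivity fall out for free.

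The only place where anything needs care is the function case, where extending the context to $\Gamma, x: S$ must preserve well-formedness so that the induction hypothesis is applicable. By Barendregt's convention and the assumption that $\Gamma$ and $\forall(x: S)U$ are well-formed, we have $x \notin \dom{\Gamma}$ and $x \notin fv(S)$, so $\wfenv{\Gamma, x: S}$ holds. No appeal to Exposure, Promotion, or Demotion is needed, and the induction measure (structural size of $T$) decreases strictly in each recursive appeal. Consequently, I do not anticipate any real obstacle: the statement was essentially engineered to hold by the design choices in Section~\ref{stepsubtyping}, and the proof is just a verification that these design choices cover every type constructor.
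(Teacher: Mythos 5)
Your proof is correct and matches the paper's intent: the paper's one-line justification (nominally ``induction on a derivation,'' but clearly meaning induction on the structure of $T$) is exactly the case analysis you spell out, with \rn{S-Refl} handling path-dependent types and the kernel-style \rn{S-All-$<:$-All} making the function case immediate. Nothing further is needed.
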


\begin{proof}
By induction on a derivation of $\ssub{\Gamma}{T}{T}$.
\end{proof}

\section{Related Work}
\textbf{F$_{<:}$}:  \citet{pierce1994bounded} showed that algorithmic subtyping for F$_{<:}$ is undecidable. Kernel F$_{<:}$ \citep{cardelli1985understanding} introduced the Exposure operation \citep{Pierce:2002:TPL:509043}, and \citet{pierce2000local} uses the Promotion and Demotion operations to do local type inference on Kernel F$_{<:}$.

\textbf{D$_{<:}$}: \citet{arxiv1} introduced and proved D$_{<:}$ sound. The version of D$_{<:}$ we use  comes from \citet{Amin2016}, and uses ANF and small-step semantics.

\textbf{DOT}: on top of D$_{<:}$, DOT adds features like recursive and intersection types. There are many presentations of DOT, some of them differing in which features are included in the calculus, whether the operational semantics are small-step or big-step, whether ANF is used, etc. All of these use a declarative (as opposed to algorithmic) presentation of the DOT type system: \citep{amin2012dependent, oopsla14, arxiv1, Amin2016, rapoport2017mutable, amin2017type, rapoport2017simple}.

\textbf{Featherweight Scala}: \citet{fs} introduced Featherweight Scala (FS$_{\text{alg}}$), which formalizes a subset of the Scala type system. They show that the calculus has decidable typing and subtyping. FS$_{\text{alg}}$ has not been proven type-safe. Featherweight Scala is neither a subset nor a superset of D$_{<:}$, and differs from D$_{<:}$ in multiple ways: it is a class-based calculus with nominal typing and has call-by-name semantics. More relevant to our work, type members in FS$_{\text{alg}}$ (which correspond to type declarations and type tags in D$_{<:}$) are either completely abstract (\code{type A}) or aliases (\code{type A = T}). It is not possible to assign lower or upper bounds to an abstract type member (\code{type A >: S <: T}), which is possible both in Scala and D$_{<:}$. Because bounds cannot be specified, it is not possible to create a custom subtyping lattice in FS$_{\text{alg}}$, so there is no bad bounds problem.

\textbf{Scala}: the Scala type system has been shown to be both unsound \citep{null} and undecidable \citep{bjarnason2009lc, bjarnason2011ski}. Because Scala's type system is not formally specified, it is hard to say at any one point in time whether a specific proof of undecidability (or unsoundness) is still valid or not \citep{sdts}.

\section{Conclusions}

This paper described our work in progress towards a version of D$_{<:}$ with algorithmic typing. We showed how a combination of bad bounds and transitivity make it unlikely that a typing algorithm satisfying the minimality condition exists for D$_{<:}$, even after removing the known source of undecidability. We also showed how the Scala compiler deals with bad bounds by dropping transitivity of subtyping. Finally, we used prior work on decidable versions of F$_{<:}$, as well as the approach taken in the Scala compiler, to develop Step Typing and Subtyping. These relations are sound and decidable, but not complete, with respect to the standard relations.

Is the subset of D$_{<:}$ that Step Typing can type interesting? Maybe. We think a more conclusive answer will depend on whether the subject reduction property holds for Step Typing. Because Step Typing mimics the behaviour of the Scala compiler, we conjecture that the lack of transitivity does not, on its own, mean we cannot type ``useful" programs (every single Scala program written to-date has been typed with a similar restriction in place).

Future work will involve establishing subject reduction, and extending Step Typing to DOT. In doing so, we will face additional challenges because of DOT's increased complexity: recursive types and type environments that are not well-formed (they can have cycles) are among DOT's features that will be problematic.

\begin{acks}
We would like to thank Marianna Rapoport, Ifaz Kabir, Paul He, and Prabhakar Ragde for proofreading this paper, as well as for helpful comments and discussions about DOT.
\end{acks}

\bibliography{bibliography}


\begin{thebibliography}{00}


\ifx \showCODEN    \undefined \def \showCODEN     #1{\unskip}     \fi
\ifx \showDOI      \undefined \def \showDOI       #1{#1}\fi
\ifx \showISBNx    \undefined \def \showISBNx     #1{\unskip}     \fi
\ifx \showISBNxiii \undefined \def \showISBNxiii  #1{\unskip}     \fi
\ifx \showISSN     \undefined \def \showISSN      #1{\unskip}     \fi
\ifx \showLCCN     \undefined \def \showLCCN      #1{\unskip}     \fi
\ifx \shownote     \undefined \def \shownote      #1{#1}          \fi
\ifx \showarticletitle \undefined \def \showarticletitle #1{#1}   \fi
\ifx \showURL      \undefined \def \showURL       {\relax}        \fi
\providecommand\bibfield[2]{#2}
\providecommand\bibinfo[2]{#2}
\providecommand\natexlab[1]{#1}
\providecommand\showeprint[2][]{arXiv:#2}

\bibitem[\protect\citeauthoryear{Amin, Gr{\"u}tter, Odersky, Rompf, and
  Stucki}{Amin et~al\mbox{.}}{2016}]%
        {Amin2016}
\bibfield{author}{\bibinfo{person}{Nada Amin}, \bibinfo{person}{Samuel
  Gr{\"u}tter}, \bibinfo{person}{Martin Odersky}, \bibinfo{person}{Tiark
  Rompf}, {and} \bibinfo{person}{Sandro Stucki}.}
  \bibinfo{year}{2016}\natexlab{}.
\newblock \bibinfo{booktitle}{{\em The Essence of Dependent Object Types}}.
\newblock \bibinfo{publisher}{Springer International Publishing},
  \bibinfo{address}{Cham}.
\newblock


\bibitem[\protect\citeauthoryear{Amin, Moors, and Odersky}{Amin
  et~al\mbox{.}}{2012}]%
        {amin2012dependent}
\bibfield{author}{\bibinfo{person}{Nada Amin}, \bibinfo{person}{Adriaan Moors},
  {and} \bibinfo{person}{Martin Odersky}.} \bibinfo{year}{2012}\natexlab{}.
\newblock \showarticletitle{Dependent object types}. In
  \bibinfo{booktitle}{{\em 19th International Workshop on Foundations of
  Object-Oriented Languages}}.
\newblock


\bibitem[\protect\citeauthoryear{Amin and Rompf}{Amin and Rompf}{2017}]%
        {amin2017type}
\bibfield{author}{\bibinfo{person}{Nada Amin} {and} \bibinfo{person}{Tiark
  Rompf}.} \bibinfo{year}{2017}\natexlab{}.
\newblock \bibinfo{booktitle}{{\em Type soundness proofs with definitional
  interpreters}}.
\newblock \bibinfo{type}{{T}echnical {R}eport}.
\newblock


\bibitem[\protect\citeauthoryear{Amin, Rompf, and Odersky}{Amin
  et~al\mbox{.}}{2014}]%
        {oopsla14}
\bibfield{author}{\bibinfo{person}{Nada Amin}, \bibinfo{person}{Tiark Rompf},
  {and} \bibinfo{person}{Martin Odersky}.} \bibinfo{year}{2014}\natexlab{}.
\newblock \showarticletitle{Foundations of path-dependent types}. In
  \bibinfo{booktitle}{{\em Proceedings of the 2014 {ACM} International
  Conference on Object Oriented Programming Systems Languages {\&}
  Applications, {OOPSLA} 2014, part of {SPLASH} 2014, Portland, OR, USA,
  October 20-24, 2014}}. \bibinfo{pages}{233--249}.
\newblock


\bibitem[\protect\citeauthoryear{Amin and Tate}{Amin and Tate}{2016}]%
        {null}
\bibfield{author}{\bibinfo{person}{Nada Amin} {and} \bibinfo{person}{Ross
  Tate}.} \bibinfo{year}{2016}\natexlab{}.
\newblock \showarticletitle{Java and Scala's Type Systems are Unsound: The
  Existential Crisis of Null Pointers}. In \bibinfo{booktitle}{{\em to appear
  in OOPSLA 2016}}.
\newblock


\bibitem[\protect\citeauthoryear{Bjarnason}{Bjarnason}{2009}]%
        {bjarnason2009lc}
\bibfield{author}{\bibinfo{person}{Rúnar Bjarnason}.}
  \bibinfo{year}{2009}\natexlab{}.
\newblock \bibinfo{title}{More Scala Typehackery}.
\newblock   (\bibinfo{year}{2009}).
\newblock
\showURL{%
\url{https://apocalisp.wordpress.com/2009/09/02/}}
\newblock
\shownote{Accessed: 2017-07-15.}


\bibitem[\protect\citeauthoryear{Bjarnason}{Bjarnason}{2011}]%
        {bjarnason2011ski}
\bibfield{author}{\bibinfo{person}{Rúnar Bjarnason}.}
  \bibinfo{year}{2011}\natexlab{}.
\newblock \bibinfo{title}{Simple SKI Combinator Calculus in Scala’s Type
  System}.
\newblock   (\bibinfo{year}{2011}).
\newblock
\showURL{%
\url{https://apocalisp.wordpress.com/2011/01/13/simple-ski-combinator-calculus-in-scalas-type-system/}}
\newblock
\shownote{Accessed: 2017-07-15.}


\bibitem[\protect\citeauthoryear{Cardelli, Martini, Mitchell, and
  Scedrov}{Cardelli et~al\mbox{.}}{1994}]%
        {cardelli1994extension}
\bibfield{author}{\bibinfo{person}{Luca Cardelli}, \bibinfo{person}{Simone
  Martini}, \bibinfo{person}{John~C Mitchell}, {and} \bibinfo{person}{Andre
  Scedrov}.} \bibinfo{year}{1994}\natexlab{}.
\newblock \showarticletitle{An extension of system F with subtyping}.
\newblock \bibinfo{journal}{{\em Information and Computation\/}}
  \bibinfo{volume}{109}, \bibinfo{number}{1-2} (\bibinfo{year}{1994}),
  \bibinfo{pages}{4--56}.
\newblock


\bibitem[\protect\citeauthoryear{Cardelli and Wegner}{Cardelli and
  Wegner}{1985}]%
        {cardelli1985understanding}
\bibfield{author}{\bibinfo{person}{Luca Cardelli} {and} \bibinfo{person}{Peter
  Wegner}.} \bibinfo{year}{1985}\natexlab{}.
\newblock \showarticletitle{On understanding types, data abstraction, and
  polymorphism}.
\newblock \bibinfo{journal}{{\em ACM Computing Surveys (CSUR)\/}}
  \bibinfo{volume}{17}, \bibinfo{number}{4} (\bibinfo{year}{1985}),
  \bibinfo{pages}{471--523}.
\newblock


\bibitem[\protect\citeauthoryear{Cremet, Garillot, Lenglet, and Odersky}{Cremet
  et~al\mbox{.}}{2006}]%
        {fs}
\bibfield{author}{\bibinfo{person}{Vincent Cremet},
  \bibinfo{person}{Fran{\c{c}}ois Garillot}, \bibinfo{person}{Sergue{\"{\i}}
  Lenglet}, {and} \bibinfo{person}{Martin Odersky}.}
  \bibinfo{year}{2006}\natexlab{}.
\newblock \showarticletitle{A Core Calculus for {Scala} Type Checking}. In
  \bibinfo{booktitle}{{\em Mathematical Foundations of Computer Science, 31st
  International Symposium, Slovakia}}.
\newblock


\bibitem[\protect\citeauthoryear{Klein, Clements, Dimoulas, Eastlund,
  Felleisen, Flatt, McCarthy, Rafkind, Tobin-Hochstadt, and Findler}{Klein
  et~al\mbox{.}}{2012}]%
        {klein2012run}
\bibfield{author}{\bibinfo{person}{Casey Klein}, \bibinfo{person}{John
  Clements}, \bibinfo{person}{Christos Dimoulas}, \bibinfo{person}{Carl
  Eastlund}, \bibinfo{person}{Matthias Felleisen}, \bibinfo{person}{Matthew
  Flatt}, \bibinfo{person}{Jay~A McCarthy}, \bibinfo{person}{Jon Rafkind},
  \bibinfo{person}{Sam Tobin-Hochstadt}, {and} \bibinfo{person}{Robert~Bruce
  Findler}.} \bibinfo{year}{2012}\natexlab{}.
\newblock \showarticletitle{Run your research: on the effectiveness of
  lightweight mechanization}.
\newblock \bibinfo{journal}{{\em ACM SIGPLAN Notices\/}} \bibinfo{volume}{47},
  \bibinfo{number}{1} (\bibinfo{year}{2012}), \bibinfo{pages}{285--296}.
\newblock


\bibitem[\protect\citeauthoryear{Odersky}{Odersky}{2016}]%
        {sdts}
\bibfield{author}{\bibinfo{person}{Martin Odersky}.}
  \bibinfo{year}{2016}\natexlab{}.
\newblock \bibinfo{title}{Scaling {DOT} to {S}cala --- {S}oundness}.
\newblock
  \bibinfo{howpublished}{\url{http://www.scala-lang.org/blog/2016/02/17/scaling-dot-soundness.html}}.
    (\bibinfo{year}{2016}).
\newblock


\bibitem[\protect\citeauthoryear{Pierce}{Pierce}{1994}]%
        {pierce1994bounded}
\bibfield{author}{\bibinfo{person}{Benjamin~C Pierce}.}
  \bibinfo{year}{1994}\natexlab{}.
\newblock \showarticletitle{Bounded quantification is undecidable}.
\newblock \bibinfo{journal}{{\em Information and Computation\/}}
  \bibinfo{volume}{112}, \bibinfo{number}{1} (\bibinfo{year}{1994}),
  \bibinfo{pages}{131--165}.
\newblock


\bibitem[\protect\citeauthoryear{Pierce}{Pierce}{2002}]%
        {Pierce:2002:TPL:509043}
\bibfield{author}{\bibinfo{person}{Benjamin~C. Pierce}.}
  \bibinfo{year}{2002}\natexlab{}.
\newblock \bibinfo{booktitle}{{\em Types and Programming Languages\/}
  (\bibinfo{edition}{1st} ed.)}.
\newblock \bibinfo{publisher}{The MIT Press}.
\newblock


\bibitem[\protect\citeauthoryear{Pierce and Turner}{Pierce and Turner}{2000}]%
        {pierce2000local}
\bibfield{author}{\bibinfo{person}{Benjamin~C Pierce} {and}
  \bibinfo{person}{David~N Turner}.} \bibinfo{year}{2000}\natexlab{}.
\newblock \showarticletitle{Local type inference}.
\newblock \bibinfo{journal}{{\em ACM Transactions on Programming Languages and
  Systems (TOPLAS)\/}} \bibinfo{volume}{22}, \bibinfo{number}{1}
  (\bibinfo{year}{2000}), \bibinfo{pages}{1--44}.
\newblock


\bibitem[\protect\citeauthoryear{Rapoport, Kabir, He, and Lhot{\'a}k}{Rapoport
  et~al\mbox{.}}{2017}]%
        {rapoport2017simple}
\bibfield{author}{\bibinfo{person}{Marianna Rapoport}, \bibinfo{person}{Ifaz
  Kabir}, \bibinfo{person}{Paul He}, {and} \bibinfo{person}{Ond{\v{r}}ej
  Lhot{\'a}k}.} \bibinfo{year}{2017}\natexlab{}.
\newblock \showarticletitle{A Simple Soundness Proof for Dependent Object
  Types}.
\newblock \bibinfo{journal}{{\em arXiv preprint arXiv:1706.03814\/}}
  (\bibinfo{year}{2017}).
\newblock


\bibitem[\protect\citeauthoryear{Rapoport and Lhot{\'a}k}{Rapoport and
  Lhot{\'a}k}{2017}]%
        {rapoport2017mutable}
\bibfield{author}{\bibinfo{person}{Marianna Rapoport} {and}
  \bibinfo{person}{Ond{\v{r}}ej Lhot{\'a}k}.} \bibinfo{year}{2017}\natexlab{}.
\newblock \showarticletitle{Mutable WadlerFest DOT}. In
  \bibinfo{booktitle}{{\em Proceedings of the 19th Workshop on Formal
  Techniques for Java-like Programs}}. ACM, \bibinfo{pages}{7}.
\newblock


\bibitem[\protect\citeauthoryear{Rompf and Amin}{Rompf and Amin}{2015}]%
        {arxiv1}
\bibfield{author}{\bibinfo{person}{Tiark Rompf} {and} \bibinfo{person}{Nada
  Amin}.} \bibinfo{year}{2015}\natexlab{}.
\newblock \showarticletitle{From {F} to {DOT:} Type Soundness Proofs with
  Definitional Interpreters}.
\newblock \bibinfo{journal}{{\em CoRR\/}}  \bibinfo{volume}{abs/1510.05216v1}
  (\bibinfo{year}{2015}).
\newblock
\showURL{%
\url{http://arxiv.org/abs/1510.05216v1}}


\end{thebibliography}

\appendix
\section{Proof of Lemma \ref{gammawellbehaved}}

\gammastarlemma*

\begin{proof}
By mutual induction on a derivation of any of the statements above.

\textbf{(\ref{wellbehaved1})} We can only type $e$ through \rn{Var} or \rn{Sub}. In the \rn{Var} case, certainly $\redp{\Gamma_{\star}(e)}$. If we are in the \rn{Sub} case, then we can use the induction hypothesis and (\ref{wellbehaved2}).

\textbf{(\ref{wellbehaved2})}

Case \rn{Top}: we are done, since $S = \top$.

Case \rn{Bot}: does not apply, because $T$ is a type declaration.

Case \rn{Refl}: trivial.

Case \rn{Trans}: $\sub{\Gamma_{\star}}{T}{U}$ and $\sub{\Gamma_{\star}}{U}{S}$. By the induction hypothesis, $U = \top$ or $\redp{U}$. If $U = \top$, then $T = \top$ by (\ref{wellbehaved7}). If $\redp{U}$, then we can apply the induction hypothesis one more time to get what we want about $T$.

Case \rn{<:-Sel}: $\sub{\Gamma_{\star}}{T}{x.A}$  and $\typ{\Gamma_{\star}}{x}{\tTypeDec{A}{T}{T'}}$. We must have $x = e$ and $A = E$, and so by (\ref{wellbehaved1}) $\redp{\tTypeDec{E}{T}{T'}}$. Then $T = \bot$ or $\bluep{T}$. Both lead to a contradiction because $T$ is a type declaration.

Case \rn{Sel-<:}: does not apply, because $T$ is a type declaration.

Case \rn{All-<:-All}: ditto, does not apply.

Case \rn{Typ-<:-Typ}: $T = \tTypeDec{A}{T_1}{T_2}$ and $S = \tTypeDec{A}{S_1}{S_2}$. Since $\redp{T}$, $T_1 = \bot$ or $\bluep{T_1}$. If $T_1 = \bot$, then $\sub{\Gamma_{\star}}{S_1}{T_1} \implies S_1 = \bot$, by (\ref{wellbehaved6}). If $\bluep{T_1}$, then (\ref{wellbehaved5}) means that $S_1 = \bot$ or $\bluep{S_1}$. Similarly, $T_2 = \top$ or $\bluep{T_2}$. If $T_2 = \top$, then (\ref{wellbehaved7}) means that $S_2 = \top$. If $\bluep{T_2}$, then (\ref{wellbehaved4}) implies $S_2 = \top$ or $\bluep{S_2}$. In any case, we end up with $\redp{\tTypeDec{A}{S_1}{S_2}}$, as needed.

\textbf{(\ref{wellbehaved3})} Similar to (\ref{wellbehaved2}) 

\textbf{(\ref{wellbehaved4})}

Case \rn{Top}: trivial.

Case \rn{Bot}: does not apply: $\bluep{T}$, so $T$ can't be $\bot$.

Case \rn{Refl}: trivial.

Case \rn{Trans}: $\sub{\Gamma_{\star}}{T}{U}$ and $\sub{\Gamma_{\star}}{U}{S}$. By the induction hypothesis, $U = \top$ or $\bluep{U}$. If $U = \top$, then (\ref{wellbehaved7}) implies $S = \top$. If $\bluep{U}$, then we can apply the induction hypothesis.

Case \rn{<:-Sel}: By (\ref{wellbehaved1}), we must have $\sub{\Gamma_{\star}}{T}{e.E}$, and by definition $\bluep{e.E}$.

Case \rn{Sel-<:}: We have $\sub{\Gamma_{\star}}{e.E}{S}$ and $\typ{\Gamma_{\star}}{e}{\tTypeDec{A}{S'}{S}}$. By (\ref{wellbehaved1}) and the definition of \rn{red}, we get that $\bluep{S}$.

Case \rn{All-<:-All}: trivial because all function types are blue.

Case \rn{Typ-<:-Typ}: does not apply, because a type declaration is not blue.

\textbf{(\ref{wellbehaved5})} Similar to (\ref{wellbehaved4}).

\textbf{(\ref{wellbehaved6})} The only interesting cases are \rn{Trans} and \rn{Sel-<:}. \rn{Trans} follows from two applications of the induction hypothesis. In \rn{Sel-<:}, (\ref{wellbehaved1}) implies that $\typ{\Gamma_{\star}}{e.E}{\bot}$, with $\typ{\Gamma_{\star}}{e}{\tTypeDec{E}{E_1}{\bot}}$. We know that $\redp{\tTypeDec{E}{E_1}{\bot}}$ from (\ref{wellbehaved1}), but this is a contradiction because red types can only have $\top$ or a blue type as an upper bound.

\textbf{(\ref{wellbehaved7})} Similar to (\ref{wellbehaved6}).
\end{proof}
\end{document}